\documentclass[runningheads]{llncs}
\usepackage[utf8]{inputenc}
\usepackage[T1]{fontenc}
\usepackage[english]{babel}
\usepackage{amssymb,amsmath}
\usepackage[font=small,center]{caption}
\usepackage{xcolor}
\usepackage{graphicx}
\usepackage{wrapfig}
\usepackage{makecell}
\usepackage{bussproofs}
\usepackage{fontawesome}
\usepackage{tabularx}
\usepackage{tikz}
\usetikzlibrary{positioning,arrows,shapes,hobby,backgrounds,calc,trees,fit}
\pgfdeclarelayer{background}
\pgfsetlayers{background,main}
\usepackage{indentfirst}
\makeatletter
\RequirePackage[bookmarks,unicode,colorlinks=true]{hyperref}%
   \def\@citecolor{blue}%
   \def\@urlcolor{blue}%
   \def\@linkcolor{blue}%

\def\orcidID#1{\smash{\href{http://orcid.org/#1}{\protect\raisebox{-1.25pt}{\protect\includegraphics{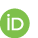}}}}}
\makeatother

\begin{document}
%
%
\title{A small-step approach to multi-trace checking against interactions (long version)} 
\titlerunning{Small-step multi-trace checking against interactions (long version)}
\author{Erwan Mahe\inst{1}\orcidID{0000-0002-5322-4337} \and
Boutheina Bannour \inst{2}\orcidID{0000-0002-4943-7807} \and
Christophe Gaston\inst{2}\orcidID{0000-0001-6865-5108} \and
Arnault Lapitre \inst{2}\orcidID{0000-0002-2185-4051} \and
Pascale Le Gall\inst{1}\orcidID{0000-0002-8955-6835} }

\authorrunning{E. Mahe, B. Bannour, C. Gaston, A. Lapitre, P. Le Gall}
\institute{Laboratoire de Mathématiques et Informatique pour la Complexité et les Systèmes\\
CentraleSupélec - Université de Paris-Saclay \\
9 rue Joliot-Curie, F-91192 Gif-sur-Yvette Cedex\and
CEA, LIST, Laboratory of Systems Requirements and Conformity Engineering, P.C. 174, Gif-sur-Yvette, 91191, France}
\maketitle
%
\definecolor{darkspringgreen}{rgb}{0.09, 0.45, 0.27}
\definecolor{hibou_col_lf}{RGB}{22, 22, 130}
\newcommand{\hlf}[1]{\textcolor{hibou_col_lf}{#1}}
\definecolor{hibou_col_ms}{RGB}{15, 86, 15}
\newcommand{\hms}[1]{\textcolor{hibou_col_ms}{#1}}
\definecolor{hibou_col_newfresh}{RGB}{94, 22, 130}
\newcommand{\hfresh}[1]{\textcolor{hibou_col_newfresh}{#1}}
%
\begin{abstract}
Interaction models describe the exchange of messages between the different components of distributed systems. We have previously defined a small-step operational semantics for interaction models. The paper extends this work by presenting an approach for checking the validity of multi-traces against interaction models. A multi-trace is a collection of traces (sequences of emissions and receptions), each representing a local view of the same global execution of the distributed system. We have formally proven our approach, studied its complexity, and implemented it in a prototype tool. Finally, we discuss some observability issues when testing distributed systems via the analysis of multi-traces.
\keywords{interaction \and small-step operational semantics \and multi-trace analysis \and distributed system}
\end{abstract}
%
%
\section{Introduction}
{\em Context.} A distributed system (DS) can be viewed as a collection of sub-systems, which are distributed over distinct physical locations and which communicate with each other by exchanging messages~\cite{Lamport19b}. 
Analyzing the executions of DSs is a key problem to assess their correctness. However the 
distributed nature of observations complicates the investigation of bugs and undesirable behaviors. 
The absence of a global clock makes the classical notion of trace often too strong to represent DS executions. Indeed a trace fully orders all events occurring in it while ordering events occurring on remote locations is often impossible. Therefore, multi-traces are better suited to model executions of DSs. A multi-trace is a collection of traces, one per sub-system, which represents the sequence of actions - emissions and receptions of messages - that have been observed at its interface. 
Contrarily to traces, multi-traces do not strongly constrain orderings between actions occurring on different sub-systems.
Our work is related to the general problem of the automatic analysis and debugging of DSs based on local logging of traces~\cite{NevesMP18,BenharratGHLG17,Zaidi12,MaceRF15,AndresCN10}.
We are positioned at the intersection of two main issues: (1) that of tracking the causality of actions in traces \cite{NevesMP18,MaceRF15} based on the happened-before relation of Lamport \cite{Lamport19b} and (2) that of checking multi-traces against formal properties \cite{AndresCN10} or models \cite{Zaidi12,BenharratGHLG17}.

{\em Contribution.} In a model-based approach, we ground our analysis on models of interactions as the reference of intended DS executions. This kind of models - which include UML Sequence Diagrams~\cite{UML}, Message Sequence Charts~\cite{MSC}, BPMN Choregraphies~\cite{BPMN} among others - are widely used to specify DSs. In such models, DS executions are thought of as a coordination of message exchanges between multiple sub-systems. This allows for a specification from a global perspective. We consider interaction models where the execution units are actions (the same as those constituting traces) and can be combined using operators of sequencing, choice, repetition and parallelism. In a previous work~\cite{fase2020}, we have proposed a small-step operational semantics for interactions, backed by an equivalent algebraic denotational semantics. This paper presents an approach to check the validity of multi-traces against interaction models.
Validity refers to the notion of being an \emph{accepted} multi-trace, intuitively reflecting the fact of fully realizing \emph{one} of the behaviours prescribed by the reference interaction model, taking into account that interaction models can be non-deterministic. We prove the correctness and discuss the complexity class of our method for analyzing multi-traces w.r.t. interaction model semantics. Moreover, we discuss observability issues arising when testing distributed systems via the analysis of multi-traces.

As part of our contribution, we also developed a prototype implementing the small-step semantics and the multi-trace analysis. This tool is able to render graphical representations detailing the steps taken by the analysis. Images of interactions in this paper were adapted from its outputs.

{\em Related work.} Interaction models have been extensively used to validate DSs using Test Case generation~\cite{DanHierons11,BannourGS11,Longuet12}. 
Much effort is spent on the generation of local test cases to mitigate the following problems: (1) "observability" i.e. the difficulty in inferring global executions from partial visions of message exchanges
and (2) "controllability" i.e. the difficulty in determining when to apply stimuli in order to realize a targeted global execution. 
Our work, however, falls within another domain which is Passive Testing~\cite{AndresCN10,Zaidi12} (in which testers are only observers), and discusses other problems such as the Test Oracle Problem~\cite{Hierons14} (determining expected outputs w.r.t. given stimuli). 
Both works~\cite{AndresCN10,Zaidi12} have proposed approaches to check a set of local logs recorded in Service Oriented Systems. 
Authors in~\cite{AndresCN10} propose a methodology to verify the conservation of invariants during the execution of the system. Both local and global invariants can be checked although the latter is more costly in terms of computations.
Our approach is different in that the reference for the analysis is not a logical property but a model of interaction as in~\cite{Zaidi12,Hierons14}. \cite{Zaidi12} discusses passive testing against choreography models expressed in the language Chor~\cite{Chor07}. It differs however from our approach in so far as: (1) Chor is less expressive than the interaction language we propose (particularly w.r.t. the absence of weak sequencing and the nature of loops), (2) \cite{Zaidi12} only handles synchronous communication between services, which cannot always describe accurately concrete implementations and (3) the local logs are not directly checked against the model but first pass through a synthesis step in which a global log is reconstituted and then this global log is checked. 
Authors in~\cite{Hierons14} investigate the computational cost of log analysis w.r.t. graphs of MSCs. This cost is compared in different cases according to the quality of observations (local or tester observability i.e. whether one have a set of independent local logs or a globally ordered log) and the expressivity of the MSC graphs (presence of choice, loop or parallelism).
The work echoes results for "MSC Membership"~\cite{AlurEY01,GenestM08} which state that this problem is NP-complete. The main factor of the cost blow-up lies in the fact that distributed actions can be equally re-ordered in multiple ways. 
Our work is in the lineage of such research but we rather consider richer interaction models (asynchronous communications, weak sequencing, no enforced fork-join, ...). As such our language is closer to the appealing expressiveness of UML Sequence Diagrams. We therefore expect higher computational costs. Nevertheless, by applying a small-step semantics guided by the reading of the multi-trace, only pertinent parts of the search space are explored.

{\em Plan.} This paper is organized as follows. 
Sec.\ref{sec:overview} presents (multi-)traces and the concrete syntax of our interaction language.
Sec.\ref{sec:semantics} describes how interaction terms can be rewritten so as to define a small-step semantics in the form of accepted traces or multi-traces.
Sec.\ref{sec:analysis} presents our multi-trace analysis as well as some theoretical properties (termination, membership characterization, NP-hardness) and discuss a possible extension of our approach to take into account observability problems.
Finally Sec.\ref{sec:conclusion} concludes the paper.

\section{(Multi-)Traces and Interactions\label{sec:overview}}

Our goal is to analyse the validity of DS executions collected in the form of sets of local logs called multi-traces, w.r.t. a given interaction model. We now introduce the basic notions required to manipulate those concepts.

The description of a DS requires distinguishing between its distinct independent sub-systems and the different messages those sub-systems can exchange. In this paper, those sub-systems are abstracted as so-called {\em lifelines} (as in most Interaction-based languages) and we will note $L$ the set of all lifelines and $M$ the set of all messages. In the rest of the paper, $L$ and $M$ will be left implicit.

The basic building blocks of both (multi-)traces and interactions are actions. An action is either the emission or the reception of a message $m$ from or towards a lifeline $l$, noted respectively $l!m$ and $l?m$. We note the set of all actions with $Act=\{ l \Delta m ~|~ l \in L, ~ \Delta \in \{!,?\}, ~ m \in M\}$. When $L$ is reduced to a singleton $\{l\}$, we note $Act(l)$.
For an action $act$ of the form $l \Delta m$, $lf(act)$ denotes $l$.

\subsection{(Multi-)Traces}

A trace characterizes a given execution of a DS as a sequence\footnote{For a set $X$, $X^*$ denotes the set of sequences of elements of $X$ with $\epsilon$ being the empty sequence and the dot notation ($.$) being the concatenation law.} of actions (from $Act^*$), which appear in the order in which they occurred globally.

Given $L=\{l_1,\cdots,l_n\}$, a multi-trace is a tuple of traces $\mu=(\sigma_1,\cdots,\sigma_n)$ where, for any $j\in[1,n]$, $\sigma_j \in Act(l_j)^*$. A multi-trace therefore describes the execution of a DS as the collection of traces locally observed on each sub-system. 
Multi-traces do not constrain orderings between actions occurring on different lifelines. We note $Mult=$\resizebox{!}{8pt}{$\prod\limits_{l \in L}$}$Act(l)^*$ the set of multi-traces.

We may use the projection operator $proj$ from Def.\ref{def:projection_into_multi_trace} to project any trace $\varsigma \in Act^*$ into a multi-trace $proj(\varsigma) \in Mult$.

\begin{definition}[Trace Projection\label{def:projection_into_multi_trace}]
$proj : Act^* \rightarrow Mult$ is s.t.:
\begin{itemize}
    \item $proj(\epsilon) = ( \epsilon, \cdots, \epsilon )$
    \item given $j\in[1,n]$ and $act \in Act(l_j)$ and $\varsigma \in Act^*$\\ if $proj(\varsigma) = (\sigma_1,\cdots,\sigma_j,\cdots,\sigma_n)$ then $proj(act.\varsigma) = (\sigma_1,\cdots,act.\sigma_j,\cdots,\sigma_n)$.
\end{itemize}
\end{definition}

For instance, if we consider the trace $\varsigma = a!m_1.c?m_1.c!m_2.d?m_2$ defined over $M=\{m_1,m_2\}$ and $L = \{a,b,c,d\}$ then:
\[
\begin{array}{ccccc}
proj(\varsigma) = (
&
a!m_1,~~
&
\epsilon,~~
&
c?m_1.c!m_2,~~
&
d?m_2~)
\end{array}
\]

\subsection{Interaction Language}

\begin{figure} 
\vspace*{-.75cm}
\centering
\begin{tabular}{ccc}
\makecell{\resizebox{0.415\textwidth}{!}{
\begin{tikzpicture}[every node/.style = {shape=rectangle, align=center}]
\node (o) { $seq$ } [sibling distance=1.75cm,level distance=0.75cm]
  child {node (o1) {$loop_{seq}$}
    child {node (o11) {$seq$} [sibling distance=1.75cm]
      child {node (o111) {$strict$} [sibling distance=.85cm]
        child {node (o1111) {\hlf{$a$}$!$\hms{$m_1$}}}
        child {node (o1112) {\hlf{$b$}$?$\hms{$m_1$}}}
      }
      child {node (o112) {$seq$} [sibling distance=.85cm]
        child {node (o1121) {$alt$} [sibling distance=.85cm]
          child {node (o11211) {$strict$} [sibling distance=.85cm]
            child {node (o112111) {\hlf{$b$}$!$\hms{$m_2$}}}
            child {node (o112112) {\hlf{$c$}$?$\hms{$m_2$}}}
          }
          child {node (o11212) {$\varnothing$} }
        }
        child {node (o1122) {\hlf{$b$}$!$\hms{$m_3$}} }
      }
    }
  }
  child {node (o2) {$par$} [sibling distance=.85cm]
    child {node (o21) {\hlf{$a$}$!$\hms{$m_1$}}}
    child {node (o22) {$strict$} [sibling distance=.85cm]
      child {node (o221) {\hlf{$c$}$!$\hms{$m_4$}}}
      child {node (o222) {\hlf{$a$}$?$\hms{$m_4$}}}
    }
  }
;
\end{tikzpicture}
}}
&
\makecell{
\scriptsize $\leftarrow \left\{\begin{array}{c}
interaction\\term
\end{array} \right.$\\
\\\vspace*{1cm}
\scriptsize $\left. \begin{array}{c}
diagram\\repr.
\end{array} \right\} \rightarrow$
}
&
\makecell{\includegraphics[width=.325\textwidth]{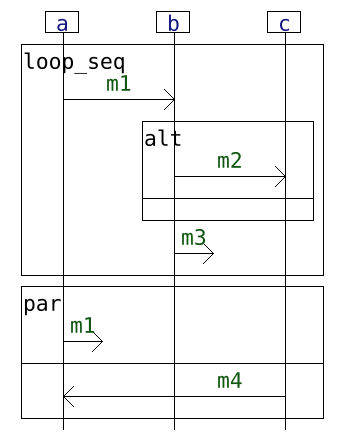}}
\end{tabular}
\caption{Example interaction}
\label{fig:ex3}
\vspace*{-.75cm}
\end{figure}

An interaction is a model which describes a DS by defining which are the actions that it may express and which are the possible orderings between those. 
As exemplified on the left of Fig.\ref{fig:ex3}, interactions are binary trees whose leaves are actions from $Act$. Precedence relations between $2$ actions at different leaf positions are then determined by the operators found in the inner nodes of the tree that separate those $2$ positions.

\begin{definition}[Interactions\label{def:labelled_interaction}]
The set $Int$ of {\em interactions} is s.t.:
\begin{itemize}
    \item $\varnothing \in Int$ and $Act \subset Int$,
    \item for $(i_{1},i_{2}) \in Int^{2}$ and $f \in \{strict,seq,alt,par\}$, $f(i_{1},i_{2}) \in Int$,
    \item for $ i \in Int$ and $f\in \{strict,seq,par\}$, $loop_f(i) \in Int$.
    \end{itemize}
\end{definition}

 The empty interaction $\varnothing$ and any action of $Act$ are basic interactions.
 $seq(i_1,i_2)$ (weak sequencing) indicates that actions specified by $i_1$ must occur before those of $i_2$ iff they occur on the same lifeline. 
 In contrast, $strict(i_1,i_2)$ (strict sequencing) imposes that actions specified by $i_1$ must occur before those of $i_2$ in any case.
 $par(i_1,i_2)$ allows actions from $i_1$ and $i_2$ to be fully interleaved while $alt(i_1,i_2)$ (exclusive alternative) specifies that either actions specified by $i_1$ or by $i_2$ occur. 
 As for the loop operators, $loop_f$ with $f \in \{seq,strict,par\}$, the index $f$ indicates with which binary operator loop unrollings have to be composed: in other words $loop_f(i_1)$ is equivalent to the term $alt(\varnothing,f(i_1,loop_f(i_1))$ (here we detailed the choice between not unrolling ($\varnothing$) and unrolling once).

Interactions can be illustrated by diagrams (cf. right part of Fig.\ref{fig:ex3}). Lifelines are depicted as vertical lines and actions $l\Delta m$ as arrows carrying their specific message $m$ and originating from or pointing towards their specific lifeline $l$. 
The passing of a message from a lifeline to another is modelled using the $strict$ operator (e.g. $strict(a!m,b?m)$ to denote the passing of $m$ from $a$ to $b$).
In diagrams, a message passing is depicted as an arrow from source to target lifeline.

\begin{wrapfigure}{l}{0.45\textwidth}
\vspace{-0.7cm}
\centering
\begin{tabular}{cc}
\makecell{\resizebox{0.225\textwidth}{!}{
\begin{tikzpicture}[every node/.style = {shape=rectangle, align=center}]
\node (o) {$seq$} [sibling distance=.85cm,level distance=0.75cm]
  child {node (o1) {$alt$} [sibling distance=.75cm]
    child {node (o11) {$strict$} [sibling distance=.85cm]
      child {node (o111) {\hlf{$b$}$!$\hms{$m_2$}}}
      child {node (o112) {\hlf{$c$}$?$\hms{$m_2$}}}
    }
    child {node (o12) {$\varnothing$}}
  }
  child {node (o21) {\hlf{$b$}$!$\hms{$m_3$}}};
\end{tikzpicture}
}}
&
\makecell{\includegraphics[width=.225\textwidth]{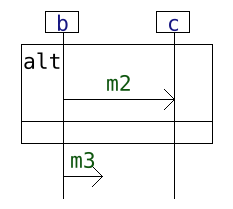}}
\end{tabular}
\caption{Small example}
\label{fig:subsubex3}
\vspace{-.75cm}
\end{wrapfigure}

Let us consider the example from Fig.\ref{fig:subsubex3} (subterm of the one from Fig.\ref{fig:ex3}). Firstly, $b$ can either send $m_2$ to $c$ or not send anything. This choice is modelled by the $alt$ alternative operator. Secondly, $b$ must send $m_3$ to the environment. The implicit sequencing that we have described in natural language with the adverbs "firstly" and "secondly" is modelled by the $seq$ weak sequencing operator, which, unlike the other operators that are drawn explicitly with boxes, is implicitly represented by the top to bottom direction.

The semantics of an interaction $i$ can be defined as a set of global traces $Accept(i)$ or (not equivalently) by a set of of multi-traces $AccMult(i)$. Fig.\ref{fig:semantics_enumeration_ex} enumerates those semantics for the interaction from Fig.\ref{fig:subsubex3}. Let us note that an interleaving between $b!m_3$ and $c?m_2$ is noticeable in $Accept(i)$ but is not in $AccMult(i)$.

\begin{figure}
\vspace*{-.5cm}
    \centering
\begin{tabularx}{\textwidth}{|X|X|}
\hline
\makecell{
$
Accept(i) = 
\left\{
\begin{array}{l}
 \hlf{b}!\hms{m_2}.\hlf{c}?\hms{m_2}.\hlf{b}!\hms{m_3},\\
 \hlf{b}!\hms{m_2}.\hlf{b}!\hms{m_3}.\hlf{c}?\hms{m_2},\\
 \hlf{b}!\hms{m_3}
\end{array}
\right\}
$
}
&
\makecell{
$
AccMult(i) = 
\left\{
\begin{array}{l}
 (\hlf{b}!\hms{m_2}.\hlf{b}!\hms{m_3},~~\hlf{c}?\hms{m_2}),\\
 (\hlf{b}!\hms{m_3},~~\epsilon)
\end{array}
\right\}
$
}
\\
\hline
\end{tabularx}
    \caption{Enumerations of finite trace and multi-trace semantics for a simple example}
    \label{fig:semantics_enumeration_ex}
\vspace*{-.75cm}
\end{figure}

\section{Accepted (multi-)traces\label{sec:semantics}}

So as to formally define the set of accepted (multi-)traces of an interaction $i$, we reformulate semantic rules \resizebox{!}{10pt}{$i \xrightarrow{act} i'$} from \cite{fase2020} without relying on some denotational counterpart (in particular, without referring to notions of precedence relations between actions, as in \cite{KnappM17,fase2020}). 
To do this, in Sec.\ref{subsec:static}, we extract information statically from the term structure of interactions. 
This information is required to define, in Sec.\ref{subsec:execution}, the small-step interaction execution function $\chi$ grounding our operational approach.
Finally, in Sec.\ref{subsec:multitr_semantics}, we provide interactions with their two semantics: $Accept$, based on global traces, and $AccMult$, obtained by projection of $Accept$.

\subsection{Static analysis of interactions\label{subsec:static}}

As an interaction $i$ can contain several occurrences of the same action $act$, our small-steps do not correspond to transformations of the form \resizebox{!}{10pt}{$i \xrightarrow{act} i'$} bur rather \resizebox{!}{10pt}{$i \xrightarrow{act@p} i'$} where $p$ indicates the position of a specific occurrence of $act$ within $i$.

\begin{wrapfigure}{l}{0.35\textwidth}
\vspace{-0.7cm}
\centering
\begin{tikzpicture}[every node/.style = {shape=rectangle, align=center}]
\node (mo) {$seq$} [sibling distance=.85cm,level distance=0.75cm]
  child {node (mo1) {$alt$} [sibling distance=.75cm]
    child {node (mo11) {$strict$} [sibling distance=.85cm]
      child {node (mo111) {\hlf{$b$}$!$\hms{$m_2$}}}
      child {node (mo112) {\hlf{$c$}$?$\hms{$m_2$}}}
    }
    child {node (mo12) {$\varnothing$}}
  }
  child {node (mo2) {\hlf{$b$}$!$\hms{$m_3$}}};
\node[right=1.25cm of mo] (o) {$\epsilon$} [sibling distance=.85cm,level distance=0.75cm]
  child {node (o1) {$1$} [sibling distance=.75cm]
    child {node (o11) {$11$} [sibling distance=.85cm]
      child {node (o111) {$111$}}
      child {node (o112) {$112$}}
    }
    child {node (o12) {$12$}}
  }
  child {node (o2) {$2$}};
\end{tikzpicture}
\caption{Positions}
\label{fig:positions}
\vspace{-.5cm}
\end{wrapfigure}
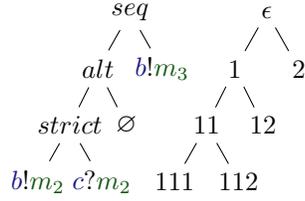

To do so, we use positions expressed in the Dewey Decimal Notation \cite{Dershowitz_rewrite_systems}. As the arity of our operators is at most 2, positions are defined as elements of $\{1,2\}^*$. A sub-interaction of an interaction $i$ at position $p$ is noted $i_{|p}$. Fig.\ref{fig:positions} illustrates positions within the interaction from Fig.\ref{fig:subsubex3}.

Moreover, for any set $P \in \mathcal{P}(\{1,2\}^*)$ and $x\in \{1,2\}$, we will note $x.P$ the set $\{ x.p ~|~ p \in P\}$.

The set $pos(i)$ of all well-defined positions w.r.t. an interaction $i$ is given below:

\begin{definition}[Positions\label{def:positions}]
$pos : Int \rightarrow \mathcal{P}(\{1,2\}^*)$ is defined as follows:
\begin{itemize}
\item $pos(\varnothing) = \{\epsilon\}$ and for any $act \in Act$, $pos(act) = \{\epsilon\}$
\item for any $i_1$ and $i_2$ in $Int$:
\begin{itemize}
    \item $pos(f(i_1,i_2)) = \{\epsilon\} \cup 1.pos(i_1) \cup 2.pos(i_2)$ with $f\in \{strict,seq,par,alt\}$
    \item $pos(loop_f(i_1)) = \{\epsilon\} \cup 1.pos(i_1)$ with $f\in \{strict,seq,par\}$
\end{itemize}
\end{itemize}
\end{definition}

We can then unambiguously designate sub-terms of an interaction $i$ (called sub-interactions) by using positions from $pos(i)$. For any $p \in pos(i)$, we use the notation $i_{|p}$ to refer to this sub-interaction of $i$ at position $p$. The formal definition of this notation is given below:

\begin{definition}[Sub-interactions\label{def:sub_interactions}]
$\__{|\_} : Int \times \{1,2\}^* \rightarrow Int$ is a partial function defined over couples $(i,p) \in Int \times \{1,2\}^*$ s.t. $p \in pos(i)$ as follows:
\begin{itemize}
\item $i_{|\epsilon} = i$ for any $i\in Int$
\item for any $i_1,i_2$ in $Int$ and $p_1\in pos(i_1)$ and $p_2 \in pos(i_2)$:
\begin{itemize}
    \item $(f(i_1,i_2))_{|1.p_1} = (i_1)_{|p_1}$ for any $f \in \{strict,seq,par,alt\}$
    \item $(f(i_1,i_2))_{|2.p_2} = (i_2)_{|p_2}$ for any $f \in \{strict,seq,par,alt\}$
    \item $(loop_f(i_1))_{|1.p_1} = (i_1)_{|p_1}$ for any $f \in \{strict,seq,par\}$
\end{itemize}
\end{itemize}
\end{definition}

The $exp_\epsilon$ function assesses statically whether or not an interaction accepts / expresses\footnote{we use the verb "express" in its name $exp_\epsilon$ so as not to risk confusion between this simple static function and the "accept" semantics defined later} the empty trace $\epsilon$. Naturally $\varnothing$ only accepts $\epsilon$, while interactions $act \in Act$ do not ($act$ must be executed). Similarly, any loop accepts $\epsilon$ because it is possible to repeat 0 times its content. The treatment of binary operators differs according to their intuitive meaning: for $alt$, it is sufficient that one of the two direct sub-interactions accepts $\epsilon$, while for the scheduling operators ($seq$, $strict$ and $par$), both have to accept $\epsilon$.

\begin{definition}[Emptiness\label{def:emptiness}]
$exp_{\epsilon} : Int \rightarrow bool$ is the function such that:
\begin{itemize}
\item $exp_{\epsilon}(\varnothing) = \top$ and for any $act \in Act$, $exp_{\epsilon}(act) = \bot$,
\item for any $i_1$ and $i_2$ in $Int$:
\begin{itemize}
    \item $exp_\epsilon(f(i_1,i_2)) = exp_{\epsilon}(i_1) \wedge exp_{\epsilon}(i_2)$ with $f\in \{strict,seq,par\}$
    \item $exp_\epsilon(alt(i_1,i_2)) = exp_{\epsilon}(i_1) \vee exp_{\epsilon}(i_2)$ 
    \item $exp_\epsilon(loop_f(i_1)) = \top$ with $f\in \{strict,seq,par\}$
\end{itemize}
\end{itemize}
\end{definition}

\begin{wrapfigure}{l}{0.35\textwidth}
\vspace{-1cm}
\centering
\scalebox{1.15}{
\begin{tikzpicture}[every node/.style = {shape=rectangle, align=center}]
\node (o) {$seq${\scriptsize \textcolor{red}{\faTimesCircle}} } [sibling distance=1.25cm,level distance=0.75cm]
  child {node (o1) {$alt${\scriptsize \textcolor{darkspringgreen}{\faCheckCircle}}} [sibling distance=1.25cm]
    child {node (o11) {$strict${\scriptsize \textcolor{red}{\faTimesCircle}}} [sibling distance=1.25cm]
      child {node (o111) {\hlf{$b$}$!$\hms{$m_2$}{\scriptsize \textcolor{red}{\faTimesCircle}}}}
      child {node (o112) {\hlf{$c$}$?$\hms{$m_2$}{\scriptsize \textcolor{red}{\faTimesCircle}} }}
    }
    child {node (o12){ $\varnothing${\scriptsize \textcolor{darkspringgreen}{\faCheckCircle}}}}
  }
  child {node (o2) {\hlf{$b$}$!$\hms{$m_3$}{\scriptsize \textcolor{red}{\faTimesCircle}} }};
\end{tikzpicture}
}
\caption{$exp_\epsilon$}
\label{fig:express_empty_ex}
\vspace{-.75cm}
\end{wrapfigure}
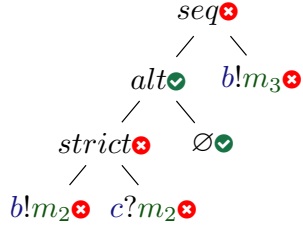

Let us consider Fig.\ref{fig:express_empty_ex}. We can recognize the interaction term of the example from Fig.\ref{fig:subsubex3}. This interaction does not express the empty trace i.e. $exp_\epsilon(i) = \bot$. Fig.\ref{fig:express_empty_ex} illustrates how $exp_\epsilon(i)$ is computed. We start from the leaf nodes. 

On the leaf node that is the empty interaction at position $12$ (i.e. $i_{|12} = \varnothing$) we have immediately that $exp_\epsilon(i_{|12}) = \top$. Indeed, the empty interaction describes an execution where no action is expressed, leading to the empty trace $\epsilon$.
We note that on Fig.\ref{fig:express_empty_ex} by drawing a {\scriptsize\textcolor{darkspringgreen}{\faCheckCircle}} symbol on the immediate right of the root node of this sub-interaction $i_{|12}$.

On the leaf nodes hosting actions, at positions $111$, $112$ and $2$, we immediately have $exp_\epsilon(i_{|111})= exp_\epsilon(i_{|112})= exp_\epsilon(i_{|2}) = \bot$.
Indeed, an interaction that is reduced to a single action $act$ describes the mandatory execution of that action, leading to a trace $\varsigma = act$ of length $1$, which is not the empty trace.
We then decorate on Fig.\ref{fig:express_empty_ex} those respective nodes with the {\scriptsize\textcolor{red}{\faTimesCircle}} symbol, signifying that those sub-interactions do not express the empty trace.

Let us then consider sub-interaction $i_{|11} = strict(b!m_2,c?m_2)$. Given that we have computed $exp_\epsilon$ for its child sub-interactions $i_{|111}$ and $i_{|112}$, we can immediately infer that $exp_\epsilon(i_{|11}) = \bot$. Indeed, the $strict$ operator is a scheduling operator i.e. it describes executions which are specific interleavings between $2$ executions: one occurring on the left sub-interaction (here $i_{|111}$), and one on the right (here $i_{|112}$). Hence, if, on at least one child sub-interaction, every execution expresses at least one action, then any execution of the parent interaction also expresses at least one action. Therefore $exp_\epsilon(i_{|11}) = \bot$. On Fig.\ref{fig:express_empty_ex}, we then decorate the root note of $i_{|11}$ (which is the node at position $11$) with the {\scriptsize\textcolor{red}{\faTimesCircle}} symbol.

Let us then consider sub-interaction $i_{|1} = alt(i_{|11},\varnothing)$. By definition, $i_{|1}$ accepts the empty trace. Indeed, given that its root node is an $alt$ operator, $i_{|1}$ describes an alternative between the expression of $2$ distinct behaviors, each one modelled by one of the two sub-interaction $i_{|11}$ and $i_{|12} = \varnothing$. $i_{|11}$ do not express the empty trace (in facts, it describes a single possible execution which leads to a trace $b!m_2.c?m_2$). $i_{|12}$ describes the empty execution, leading to the empty trace $\epsilon$. Therefore $i_{|1}$ can either produce trace $b!m_2.c?m_2$ or the empty trace. Hence we have $exp_\epsilon(i_{|1}) = \top$, which we note by symbol {\scriptsize\textcolor{darkspringgreen}{\faCheckCircle}} on the node at position $1$ on Fig.\ref{fig:express_empty_ex}.

Finally, whether or not the overall interaction $i=seq(i_{|1},i_{|2})$ expresses $\epsilon$ is determined as for any other sub-interaction. Here the root node is a scheduling $seq$ operator therefore so as to express $\epsilon$, $i$ would require both $i_{|1}$ and $i_{|2}$ to do so. As a result we have $exp_\epsilon(i) = \bot$.

The $avoids$ function states, for an interaction $i$ and a lifeline $l$, whether or not $i$ accepts an execution that involves no actions occurring on $l$.
The empty interaction $\varnothing$ "avoids" every lifeline. An action $l'\Delta m$ "avoids" $l$ iff it occurs on a different lifeline. Then, as for $exp_\epsilon$, $avoids$ is defined inductively. Any loop may avoid any lifeline given that, in any case, it is possible to repeat 0 times its content. For an interaction of the form $i=alt(i_{|1},i_{|2})$, it is sufficient that any one of the two sub-interactions $i_{|1}$ or $i_{|2}$ avoids $l$ so that $i$ may avoid $l$. For the scheduling operators ($seq$, $strict$ and $par$), both have to avoid $l$.

\begin{definition}[Avoiding\label{def:avoiding}]
We define the functions $avoids : Int \times L \rightarrow bool$ s.t. for any $l \in L$:
\begin{itemize}
\item $avoids(\varnothing,l) = \top$
\item $avoids(l'\Delta m,l) = (l' \neq l)$, for any $act = l'\Delta m \in Act$,
\item for any $i_1$ and $i_2$ in $Int$:
\begin{itemize}
    \item $avoids(f(i_1,i_2),l) = avoids(i_1,l) \wedge avoids(i_2,l)$ with $f\in \{strict,seq,par\}$
    \item $avoids(alt(i_1,i_2),l) = avoids(i_1,l) \vee avoids(i_2,l)$ 
    \item $avoids(loop_f(i_1),l) = \top$ with $f\in \{strict,seq,par\}$
\end{itemize}
\end{itemize}
\end{definition}

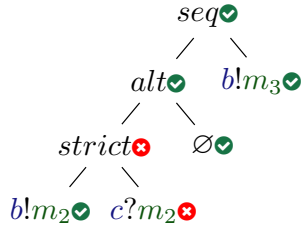
\begin{wrapfigure}{l}{0.35\textwidth}
\vspace{-1cm}
\centering
\scalebox{1.15}{
\begin{tikzpicture}[every node/.style = {shape=rectangle, align=center}]
\node (o) {$seq${\scriptsize \textcolor{darkspringgreen}{\faCheckCircle}} } [sibling distance=1.25cm,level distance=0.75cm]
  child {node (o1) {$alt${\scriptsize \textcolor{darkspringgreen}{\faCheckCircle}}} [sibling distance=1.25cm]
    child {node (o11) {$strict${\scriptsize \textcolor{red}{\faTimesCircle}}} [sibling distance=1.25cm]
      child {node (o111) {\hlf{$b$}$!$\hms{$m_2$}{\scriptsize \textcolor{darkspringgreen}{\faCheckCircle}}}}
      child {node (o112) {\hlf{$c$}$?$\hms{$m_2$}{\scriptsize \textcolor{red}{\faTimesCircle}} }}
    }
    child {node (o12){ $\varnothing${\scriptsize \textcolor{darkspringgreen}{\faCheckCircle}}}}
  }
  child {node (o2) {\hlf{$b$}$!$\hms{$m_3$}{\scriptsize \textcolor{darkspringgreen}{\faCheckCircle}} }};
\end{tikzpicture}
}
\caption{$avoids(i,c)$}
\label{fig:avoid_ex}
\vspace{-.75cm}
\end{wrapfigure}

Fig.\ref{fig:avoid_ex} describes the application of $avoids(\_,c)$ on an interaction $i$ (from Fig.\ref{fig:subsubex3}) and its sub-interactions.
At the leaf nodes, $avoids(\_,c)$ is either $\top$ (on $\varnothing$ or on actions which do not occur on $c$) or $\bot$ (on actions which occur on $c$). In this interaction $i$, the only action occurring on lifeline $c$ is $c?m_2$. At this leaf node we therefore put the {\scriptsize \textcolor{red}{\faTimesCircle}} symbol to signify that $avoids(c?m_2,c) = \bot$. In all other leaf actions, we put the {\scriptsize \textcolor{darkspringgreen}{\faCheckCircle}} symbol. Then, $avoids(\_,c)$ is computed from bottom to top w.r.t. the interaction term, in the exact same manner as $exp_\epsilon$ would be. The value of $avoids(\_,c)$ on a parent interaction is inferred from the values computed on child sub-interactions depending on the nature of the parent operator.

In the following, for any $i,l\in Int\times L$ we will simply note:
\begin{itemize}
    \item $exp_\epsilon(i)$ when $exp_\epsilon(i) = \top$
    \item $\neg exp_\epsilon(i)$ when $exp_\epsilon(i) = \bot$
    \item $avoids(i,l)$ when $avoids(i,l) = \top$ 
    \item $\neg avoids(i,l)$ when $avoids(i,l) = \bot$.
\end{itemize}


Among all actions leaves of $i$, only some are immediately executable.
The function $front$ (for {\em frontier}) in Def.\ref{def:frontier_labelled}, determines the positions of all such actions.

\begin{definition}[Frontier\label{def:frontier_labelled}]
$front: Int \rightarrow \mathcal{P}( \{1,2\}^* )$ is the function s.t.:
\begin{itemize}
\item  $front(\varnothing)=\emptyset$  and for any $act \in Act$, $front(act) = \{ \epsilon \}$,
\item for any $i_1$ and $i_2$ in $Int$:
    \begin{itemize}
        \item $front(strict(i_1,i_2)) = \left\{ \begin{array}{ll}
1.front(i_1) \cup 2.front(i_2)  & \text{if } exp_\epsilon(i_1) \\
1.front(i_1)  & \text{else}
\end{array} \right.$
        \item $front(seq(i_1,i_2)) = 1.front(i_1)\cup \{p ~|~ p \in 2.front(i_2) \; , \;  avoids(i_1,lf(i_{|p}) \}$,
        \item $front(f(i_1,i_2)) = 1.front(i_1) \cup 2.front(i_2)$ with $f \in \{ alt , par \}$
        \item $front(loop_f(i_1)) = 1.front(i_1)$ for $f \in \{strict,seq,par\}$.
    \end{itemize}
\end{itemize}
For any $p\in front(i)$, $i_{|p}$ is a called a frontier action.
\end{definition}

The empty interaction has an empty frontier: $front(\varnothing)=\emptyset$. For any action $act$, $front(act)=\{\epsilon\}$ ($\epsilon$ is the position of $act$ which is immediately executable).
For $i$ of the form $f(i_1,i_2)$, $front(i)$ is inferred from $front(i_1)$ and $front(i_2)$. In all cases, actions occurring at positions in $front(i_1)$ are immediately executable in $i$. Indeed, the term being read from left to right, all operators, if they introduce ordering constraints, will only do so on the right sub-interaction $i_2$. Thus $1.front(i_1)$ is included in $front(i)$. If $f=alt$ or $f=par$, $2.front(i_2)$ is also included in $front(i)$ because no constraint may prevent the execution of actions from $i_2$. If $f=strict$, any action from $i_2$ can only be executed if no action from $i_1$ is (otherwise it would violate the strict sequencing). Therefore $2.front(i_2)$ is included in $front(i)$ iff $i_1$ accepts the empty trace. If $f=seq$, elements $p$ from $2.front(i_2)$ are included in $front(i)$ iff $i_1$ accepts an execution that does not involve the lifeline on which the action $i_{|p}$ occurs.

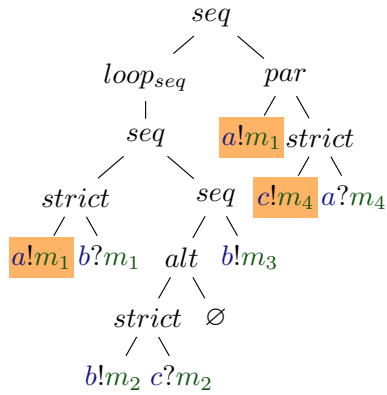
\begin{wrapfigure}{l}{0.425\textwidth}
\vspace*{-0.8cm}
    \centering
\resizebox{0.45\textwidth}{!}{
\begin{tikzpicture}[every node/.style = {shape=rectangle, align=center}]
\node (o) { $seq$ } [sibling distance=1.75cm,level distance=0.75cm]
  child {node (o1) {$loop_{seq}$}
    child {node (o11) {$seq$} [sibling distance=1.75cm]
      child {node (o111) {$strict$} [sibling distance=.85cm]
        child {node (o1111) {\hlf{$a$}$!$\hms{$m_1$}}}
        child {node (o1112) {\hlf{$b$}$?$\hms{$m_1$}}}
      }
      child {node (o112) {$seq$} [sibling distance=.85cm]
        child {node (o1121) {$alt$} [sibling distance=.85cm]
          child {node (o11211) {$strict$} [sibling distance=.85cm]
            child {node (o112111) {\hlf{$b$}$!$\hms{$m_2$}}}
            child {node (o112112) {\hlf{$c$}$?$\hms{$m_2$}}}
          }
          child {node (o11212) {$\varnothing$} }
        }
        child {node (o1122) {\hlf{$b$}$!$\hms{$m_3$}} }
      }
    }
  }
  child {node (o2) {$par$} [sibling distance=.85cm]
    child {node (o21) {\hlf{$a$}$!$\hms{$m_1$}}}
    child {node (o22) {$strict$} [sibling distance=.85cm]
      child {node (o221) {\hlf{$c$}$!$\hms{$m_4$}}}
      child {node (o222) {\hlf{$a$}$?$\hms{$m_4$}}}
    }
  }
;
\begin{pgfonlayer}{background}
\fill[orange!60] ($(o1111) + (-.4,-.25)$) -- ($(o1111) + (-.4,+.25)$) -- ($(o1111) + (+.4,+.25)$) -- ($(o1111) + (+.4,-.25)$) -- cycle;
\fill[orange!60] ($(o21) + (-.4,-.25)$) -- ($(o21) + (-.4,+.25)$) -- ($(o21) + (+.4,+.25)$) -- ($(o21) + (+.4,-.25)$) -- cycle;
\fill[orange!60] ($(o221) + (-.4,-.25)$) -- ($(o221) + (-.4,+.25)$) -- ($(o221) + (+.4,+.25)$) -- ($(o221) + (+.4,-.25)$) -- cycle;
\end{pgfonlayer};
\end{tikzpicture}
}
    \caption{Frontier actions (highlighted)}
    \label{fig:frontier_ex}
\vspace*{-1cm}
\end{wrapfigure}

Fig.\ref{fig:frontier_ex} illustrates the definition of $front$ on the example from Fig.\ref{fig:ex3}. Given the $8$ different actions on leaves, we have $front(i) \subseteq \{1111,1112,112111,112112,$ $1122,21,221,222\}$. Actions on the right of every $strict$ operators are prevented from being executed by those on their left and as such are not in the frontier. This eliminates $\{1112,112112,222\}$. $b!m_2$ and $b!m_3$ are prevented from being executed by $b?m_1$ which is a cousin on their left w.r.t the $seq$ operator at position $11$. This eliminates $\{1112,1122\}$. Then, by elimination, $front(i) = \{1111,21,221\}$.

\subsection{Interaction Execution\label{subsec:execution}}

We now define the small-step used in our operational semantics. It consists in transforming an interaction $i$ having the position $p$ in its frontier into an interaction $i'$ s.t. $i'$ characterizes in intentions all the possible futures of the execution of the action $i_{|p}$ according to $i$.

At first, we define a function that associates to any interaction $i$ that may avoid $l$ (i.e. s.t. $avoids(i,l)$), a new interaction, which characterizes exactly all the executions of $i$ that do not involve lifeline $l$.

\begin{definition}[Pruning\label{def:prune}]
The function $prune : Int \times L \rightarrow Int$ is defined for couples $(i,l)$ in $Int \times L$ verifying $avoids(i,l)$ by:
\begin{itemize}
    \item $prune(\varnothing,l) = \varnothing$ and for any $act \in Act$, $prune(act,l)=act$
    \item for any $(i_1,i_2) \in Int^2$, $prune(alt(i_1,i_2),l)$ is equal to:
    \begin{itemize}
        \item $prune(i_2,l)$ if $avoids(i_2,l) \wedge \neg avoids(i_1,l)$
        \item $prune(i_1,l)$ if $avoids(i_1,l) \wedge \neg avoids(i_2,l)$
        \item $alt(prune(i_1,l),prune(i_2,l))$ if $avoids(i_1,l) \wedge avoids(i_2,l)$
    \end{itemize}
    \item for any $(i_1,i_2) \in Int^2$ and any $f \in \{strict,seq,par\}$:
    \begin{itemize}
        \item $prune(f(i_1,i_2),l)= f(prune(i_1,l),prune(i_2,l))$
    \end{itemize}
    \item for any $i \in Int$ and any $f \in \{strict,seq,par\}$:
    \begin{itemize}
        \item $prune(loop_f(i),l)=loop_f(prune(i,l))$ if $avoids(i,l)$
        \item $prune(loop_f(i),l)=\varnothing$ if $\neg avoids(i,l)$
    \end{itemize}
\end{itemize}
\end{definition}

For any given lifeline $l$, $prune(\_,l) : Int \rightarrow Int$ eliminates from a given interaction $i$ (s.t. the precondition $avoids(i,l)$ is satisfied) all actions occurring on lifeline $l$ while preserving a maxima the original semantics of $i$ i.e. so that $Accept(prune(i,l)) \subseteq Accept(i)$ and $Accept(prune(i,l))$ is the maximum subset of $Accept(i)$ that contains no trace in which there are actions occurring on $l$.

So as to preserve the semantics, the interaction term $i$ can only be modified in two manners with the aim to eliminate actions: (1) by forcing the choice of a given sub-interaction in $alt$ nodes (illustrated on Fig.\ref{fig:pruning_of_alt}) and (2) by choosing to forbid the repetition of a sub-interaction in $loop$ nodes (illustrated on Fig.\ref{fig:pruning_of_loop}). Those modifications strictly correspond to the elimination of some possible executions of $i$ and therefore we have $Accept(prune(i,l)) \subseteq Accept(i)$.

We describe in the following the mechanism of $prune$ formalised in Def.\ref{def:prune}. Let's consider a lifeline $l$. We have $prune(\varnothing,l) = \varnothing$ because there is nothing to eliminate. For any action $act \in Act$, $prune(act,l)$ is well defined iff $avoids(act,l)$. Therefore, $act$ is not an action that needs to be eliminated and $prune(act,l) = act$. For $i=alt(i_{|1},i_{|2})$, in order for the precondition $avoids(i,l)$ to be satisfied, we have either or both of $avoids(i_{|1},l)$ or  $avoids(i_{|2},l)$. If both branches avoid $l$ they can be pruned and kept in the interaction term. If only a single one does, we only keep the pruned version of this single branch. For any scheduling operator $f$, if $i=f(i_{|1},i_{|2})$, in order to have $avoids(i,l)$ we must have both $avoids(i_{|1},l)$ and  $avoids(i_{|2},l)$. Then $prune(i,l)$ is simply defined as the scheduling by $f$ of the pruned versions of $i_{|1}$ and $i_{|2}$. Finally, for loops, i.e. with $i$ of the form $loop_f(i_{|1})$ with $f$ a scheduling operator, we distinguish two cases. (1) If $\neg avoids(i_{|1},l)$ then any execution of $i_{|1}$ will yield a trace containing actions occurring on $l$. Therefore it is necessary to forbid the repetition of the loop. This is done by specifying that $prune(i,l) = \varnothing$. (2) If $avoids(i_{|1},l)$ then it is not necessary to forbid the repetition of the loop, given that sub-interaction $i_{|1}$ can be pruned and therefore may not yield traces with actions occurring on $l$. This being the modification which preserves a maximum amount of traces of the semantics, we have $prune(i) = loop_f(prune(i_{|1},l))$. The recursive nature of $prune$ then guarantees that only the minimally required modifications are done on the interaction term so as to eliminate from it undesired actions.

\begin{figure}
\vspace{-.75cm}
    \centering
\begin{tabular}{ccc}
\makecell{
\includegraphics[scale=.35]{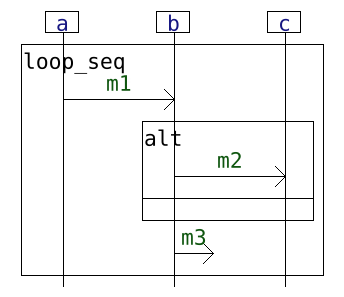}
}
&
\makecell{
\scalebox{1.1}{
\begin{tikzpicture}[every node/.style = {shape=rectangle, align=center}]
\node (o1) { $loop_{seq}$ } [sibling distance=1.75cm,level distance=0.75cm]
child {node (o11) {$seq$} [sibling distance=1.75cm]
      child {node (o111) {$strict$} [sibling distance=.85cm]
        child {node (o1111) {\hlf{$a$}$!$\hms{$m_1$}}}
        child {node (o1112) {\hlf{$b$}$?$\hms{$m_1$}}}
      }
      child {node (o112) {$seq$} [sibling distance=1cm]
        child {node (o1121) {$alt$} [sibling distance=.85cm]
          child {node (o11211) {$strict$} [sibling distance=.85cm]
            child {node (o112111) {\hlf{$b$}$!$\hms{$m_2$}}}
            child {node (o112112) {\hlf{$c$}$?$\hms{$m_2$}}}
          }
          child {node (o11212) {$\varnothing$} }
        }
        child {node (o1122) {\hlf{$b$}$!$\hms{$m_3$}} }
      }
    }
;
\draw[blue,very thick] (o112112.south east) -- (o112112.north west);
\draw[blue,very thick] (o112112.south west) -- (o112112.north east);
\draw[blue,very thick] (o11211.south east) -- (o11211.north west);
\draw[blue,very thick] (o11211.south west) -- (o11211.north east);
\draw[blue,very thick] (o1121.east) -- (o1121.west);
\draw[->,blue,very thick] ([xshift=-5pt,yshift=-5pt] o11212.north east) to [bend right=45] ([xshift=2.5pt] o1121.east);
\end{tikzpicture}}
}
&
\makecell{
\includegraphics[scale=.35]{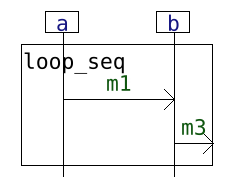}
}\\
\makecell{
\scriptsize\textit{interaction before pruning}
}
&
\makecell{
\scriptsize\textit{pruning with $prune(\_,c)$}
}
&
\makecell{
\scriptsize\textit{interaction after pruning}
}
\end{tabular}
\caption{Illustration of $prune$ (case where a branch of alternative is cut-out)}
\label{fig:pruning_of_alt}
\vspace{-.75cm}
\end{figure}

Fig.\ref{fig:pruning_of_alt} illustrates a specific application of the pruning process. We consider an interaction $i$, drawn on the left part of Fig.\ref{fig:pruning_of_alt} and which term is given in the middle part of Fig.\ref{fig:pruning_of_alt}. $i$ is defined over the set $L=\{a,b,c\}$ of lifelines. We then apply $prune(\_,c)$ on $i$ to obtain the interaction drawn on the right part of Fig.\ref{fig:pruning_of_alt}. 
The blue lines represent the rewriting orchestrated by the $prune$ function.
The only action occurring on $c$ in $i$ is $c?m_2$. It must be eliminated. As its parent is a scheduling operator ($strict$), it must also be eliminated. The grand-parent node is an $alt$ operator. The right cousin underneath this $alt$ is $\varnothing$, which "avoids" $c$. Therefore, we can force the choice of the right branch of this $alt$ to solve the pruning. The remaining interaction then does not contain any action occurring on $c$. As explained earlier, $prune$ made the minimal modifications to $i$ so as to eliminate $c?m_2$. For instance, we could have simply (and naively) forbidden the repetition of the $loop$ at the root position; but this would also have eliminated from the semantics of the remaining interaction a number of traces which we do not want to be eliminated.

\begin{figure}
\vspace{-.75cm}
    \centering
\begin{tabular}{ccc}
\makecell{
\includegraphics[scale=.35]{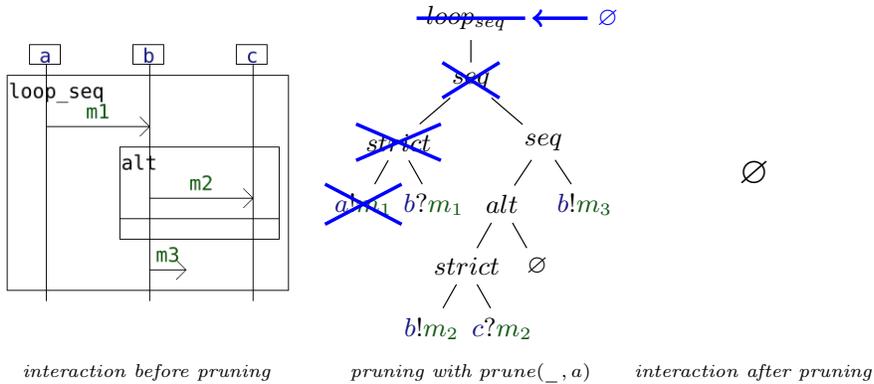}
}
&
\makecell{
\scalebox{1.1}{
\begin{tikzpicture}[every node/.style = {shape=rectangle, align=center}]
\node (o1) { $loop_{seq}$ } [sibling distance=1.75cm,level distance=0.75cm]
child {node (o11) {$seq$} [sibling distance=1.75cm]
      child {node (o111) {$strict$} [sibling distance=.85cm]
        child {node (o1111) {\hlf{$a$}$!$\hms{$m_1$}}}
        child {node (o1112) {\hlf{$b$}$?$\hms{$m_1$}}}
      }
      child {node (o112) {$seq$} [sibling distance=1cm]
        child {node (o1121) {$alt$} [sibling distance=.85cm]
          child {node (o11211) {$strict$} [sibling distance=.85cm]
            child {node (o112111) {\hlf{$b$}$!$\hms{$m_2$}}}
            child {node (o112112) {\hlf{$c$}$?$\hms{$m_2$}}}
          }
          child {node (o11212) {$\varnothing$} }
        }
        child {node (o1122) {\hlf{$b$}$!$\hms{$m_3$}} }
      }
    }
;
\draw[blue,very thick] (o1111.south east) -- (o1111.north west);
\draw[blue,very thick] (o1111.south west) -- (o1111.north east);
\draw[blue,very thick] (o111.south east) -- (o111.north west);
\draw[blue,very thick] (o111.south west) -- (o111.north east);
\draw[blue,very thick] (o11.south east) -- (o11.north west);
\draw[blue,very thick] (o11.south west) -- (o11.north east);
\draw[blue,very thick] (o1.east) -- (o1.west);
\node[blue,thick] (varno) at ([xshift=1cm] o1.east) {$\varnothing$};
\draw[->,blue,very thick] (varno) edge ([xshift=2.5pt] o1.east);
\end{tikzpicture}}
}
&
\makecell{
\Large$\varnothing$
}\\
\makecell{
\scriptsize\textit{interaction before pruning}
}
&
\makecell{
\scriptsize\textit{pruning with $prune(\_,a)$}
}
&
\makecell{
\scriptsize\textit{interaction after pruning}
}
\end{tabular}
\caption{Illustration of $prune$ (case where the repetition of a loop is forbidden)}
\label{fig:pruning_of_loop}
\vspace{-.75cm}
\end{figure}

Fig.\ref{fig:pruning_of_loop} illustrates a specific application of the pruning process. We consider the same interaction $i$ as in Fig.\ref{fig:pruning_of_alt}, However we consider the pruning w.r.t. lifeline $a$ instead of $c$.
The only action occurring on $a$ in $i$ is $a!m_1$. It must be eliminated. Its parent and grand-parent being respectively a $strict$ and a $seq$ operator, they must both be eliminated. Finally, a $loop$ node is reached. At this point, the only choice is to forbid the repetition of this loop. We therefore replace it by the empty interaction $\varnothing$ (as indicated in blue) to obtain the interaction on the right.

We can now define the "e$\chi$ecution" function $\chi$ which computes, from a given interaction $i$ and position $p \in front(i)$, the interaction $i'$ which characterizes all the continuations of the executions of $i$ which start with the execution of action $i_{|p}$ at position $p$.

\begin{definition}[Interaction Execution\label{def:labelled_interaction_execution}]
The function $\chi : Int \times \{1,2\}^* \rightarrow Int \times Act$ , defined for couples $(i,p)$ verifying $p \in front(i)$ is s.t.:
\begin{itemize}
    \item for any $act \in Act$, $\chi(act,\epsilon) = (\varnothing,act)$
    \item for any $i_1,i_2\in Int^2$, $p_1\in front(i_1)$, let us note $\chi(i_1,p_1) = (i'_1,act)$, then:
    \begin{itemize}
        \item $\chi(alt(i_1,i_2),1.p_1)=(i'_1,act)$,
        \item $\chi(f(i_1,i_2),1.p_1)=(f(i'_1,i_2),act)$ for $f \in \{strict,seq,par\}$, 
        \item $\chi(loop_{f}(i_1),1.p_1)=(f(i'_1,loop_{f}(i_1)),act)$ for $f \in \{strict,seq,par\}$,
    \end{itemize}
\item for any $i_1,i_2\in Int^2$, $p_2\in front(i_2)$, let us note $\chi(i_2,p_2) = (i'_2,act)$, then:
    \begin{itemize}
    \item $\chi(alt(i_1,i_2),2.p_2)=(i'_2,act)$
    \item  $\chi(strict(i_1,i_2),2.p_2)=(i'_2,act)$ iff $2.p_2\in front(strict(i_1,i_2))$ 
    \item $\chi(seq(i_1,i_2),2.p_2)=(seq(prune(i_1,lf(act)),i'_2),act)$\\
    iff $2.p_2\in front(seq(i_1,i_2))$ 
    \item $\chi(par(i_1,i_2),2.p_2)=(par(i_1,i'_2),act)$.
    \end{itemize}
\end{itemize}
\end{definition}

$\chi$ is defined by induction on the cases authorized by its precondition $p\in front(i)$. If $i \in Act$, $p$ can only be $\epsilon$ (and vice-versa). In this case $\chi(i,\epsilon)=(\varnothing,i)$ because the action $i$ is executed and nothing remains to be executed. In any other case, $p$ is either of the form $1.p_1$ or $2.p_2$, meaning that the action to be executed is resp. in the left or right sub-interaction. Then the result of $\chi(i,p)$ is a reconstruction of the interaction term from resp. the result of $\chi(i_1,p_1)$ and $i_2$ or the result of $\chi(i_2,p_2)$ and $i_1$.
The most subtle case occurs when $p=2.p_2$ and $i=seq(i_1,i_2)$. The precondition $p\in front(i)$ implies that $i_{|p} \in Act$ and that the left child $i_1$ avoids $lf(i_{|p})$. In this case, to construct $\chi(i,2.p_2)$, $\chi$ does not use $i_1$ but rather its pruned version $prune(i_1,lf(i_{|p}))$ which eliminates all traces involving $lf(i_{|p})$ while preserving all others.

\begin{figure}
    \centering
\begin{tabular}{ccc}
\makecell{
\includegraphics[scale=.325]{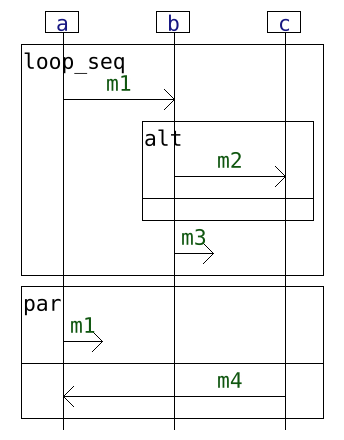}
}
&
\makecell{
\scalebox{.95}{
\begin{tikzpicture}[every node/.style = {shape=rectangle, align=center}]
\node (o) { $seq$ } [sibling distance=1.75cm,level distance=0.75cm]
  child {node (o1) {$loop_{seq}$}
    child {node (o11) {$seq$} [sibling distance=1.75cm]
      child {node (o111) {$strict$} [sibling distance=.85cm]
        child {node (o1111) {\hlf{$a$}$!$\hms{$m_1$}}}
        child {node (o1112) {\hlf{$b$}$?$\hms{$m_1$}}}
      }
      child {node (o112) {$seq$} [sibling distance=1.cm]
        child {node (o1121) {$alt$} [sibling distance=.85cm]
          child {node (o11211) {$strict$} [sibling distance=.85cm]
            child {node (o112111) {\hlf{$b$}$!$\hms{$m_2$}}}
            child {node (o112112) {\hlf{$c$}$?$\hms{$m_2$}}}
          }
          child {node (o11212) {$\varnothing$} }
        }
        child {node (o1122) {\hlf{$b$}$!$\hms{$m_3$}} }
      }
    }
  }
  child {node (o2) {$par$} [sibling distance=.85cm]
    child {node (o21) {\hlf{$a$}$!$\hms{$m_1$}}}
    child {node (o22) {$strict$} [sibling distance=.85cm]
      child {node (o221) {\hlf{$c$}$!$\hms{$m_4$}}}
      child {node (o222) {\hlf{$a$}$?$\hms{$m_4$}}}
    }
  }
;
\begin{pgfonlayer}{background}
\draw[blue,fill=blue,opacity=0.2](o1111.south west) to[closed,curve through={(o1111.west) .. (o1111.north west) .. (o111.south west) .. (o111.west) .. (o111.north west) .. (o11.south west) .. (o11.west) .. (o11.north west) .. (o1.south west) .. (o1.west) .. (o1.north west) .. (o1.north) .. (o1.north east) .. (o1.east) .. (o1.south east) .. (o11.north east) .. (o11.east) ..  (o11.south east) .. (o112.north east) .. (o112.east) .. (o112.south east) .. (o1122.north east) .. (o1122.east) .. (o1122.south east) .. (o1122.south) .. (o11212.north east) .. (o11212.east) .. (o11212.south east) .. (o112112.north east) .. (o112112.east) .. (o112112.south east) .. (o112112.south) .. (o112111.south) .. (o112111.south west) .. (o112111.west) .. (o112111.north west) .. (o1111.south east) .. (o1111.south)}](o1111.south west);
\fill[orange!60] ($(o221) + (-.4,-.25)$) -- ($(o221) + (-.4,+.25)$) -- ($(o221) + (+.4,+.25)$) -- ($(o221) + (+.4,-.25)$) -- cycle;
\end{pgfonlayer};
\draw[blue,very thick] (o112112.south east) -- (o112112.north west);
\draw[blue,very thick] (o112112.south west) -- (o112112.north east);
\draw[blue,very thick] (o11211.south east) -- (o11211.north west);
\draw[blue,very thick] (o11211.south west) -- (o11211.north east);
\draw[blue,very thick] (o1121.east) -- (o1121.west);
\draw[->,blue,very thick] ([xshift=-5pt,yshift=-5pt] o11212.north east) to [bend right=45] ([xshift=2.5pt] o1121.east);
\draw[red,very thick] (o221.south east) -- (o221.north west);
\draw[red,very thick] (o221.south west) -- (o221.north east);
\draw[red,very thick] (o22.east) -- (o22.west);
\draw[->,red,very thick] ([xshift=-5pt,yshift=-5pt] o222.north east) to [bend right=45] ([xshift=2.5pt] o22.east);
\node[draw] (legP) at (1,-4.4) { {\scriptsize \textcolor{blue}{$\blacksquare$} pruning} };
\node[draw,above=.6 of legP.west,anchor=west] (legX) { {\scriptsize \textcolor{red}{$\blacksquare$} e$\chi$ecution} };
\end{tikzpicture}
}
}
&
\makecell{
\includegraphics[scale=.325]{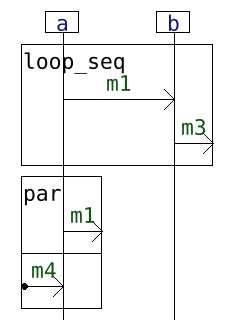}
}\\
\makecell{
\scriptsize\textit{interaction before $\chi$}
}
&
\makecell{
\scriptsize\textit{e$\chi$ecution}
}
&
\makecell{
\scriptsize\textit{interaction after $\chi$}
}
\end{tabular}
\caption{Illustration of e$\chi$ecution}
\label{fig:execution_example}
\vspace{-.75cm}
\end{figure}

Fig.\ref{fig:execution_example} illustrates an application of $\chi$ on the example interaction from Fig.\ref{fig:ex3}. The action $c!m_4$ at the frontier position $221$ is being executed. On the left is the original interaction $i$ and on the right the resulting $i'$ interaction s.t. $\chi(i,p) = (i',c!m_4)$. In the middle is illustrated the process of computing $i'$ from the rewriting of the interaction term $i$.  The first step is the elimination of the target action $c!m_4$ which is done by replacing it with $\varnothing$. Then $i'$ is reconstructed from the bottom to the top. The immediate parent of $c!m_4$ at position $22$ is a $strict$ operator. 
On Fig.\ref{fig:execution_example} we have added term simplification steps which correspond to the elimination of $\varnothing$ children of scheduling operators.
Those simplifications steps are immediate and do not incur changes in the model's semantics as for any scheduling operator $f$ and interaction $i_0$ we have that $f(i_0,\varnothing)$ and $f(\varnothing,i_0)$ are equivalent to $i_0$.
As such, when reaching $i_{|22}$ during the rewriting of $i$ into $i'$, we simply replace the $strict$ node by its right child $a?m_4$. When reaching the $par$ node at position $2$, we do not write any change. However, when reaching the $seq$ node at root position $\epsilon$, the rewriting of $i$ involves the pruning of the left sub-interaction $i_{|1}$ (highlighted in blue) w.r.t. the lifeline on which the executed action $c!m_4$ occurs (i.e. $c$).

\subsection{Definition of accepted (multi-)traces \label{subsec:multitr_semantics}}

Our small-step approach then consists in the exploration of an execution tree representing all possible successions of transformations \resizebox{!}{10pt}{$i\xrightarrow{act@p}i'$}, starting from an initial interaction $i_0$. An accepted trace corresponds to a sequence $act_1.\cdots.act_n$ obtained from 
a path \resizebox{!}{10pt}{$i_0\xrightarrow{act_1@p_1}i_1 \cdots \xrightarrow{act_n@p_n}i_n$} with 
$i_n$ a terminal interaction, i.e. accepting $\epsilon$. By grouping all such paths together, we obtain a tree, called the execution tree, whose nodes are interactions and arcs are labelled by couples $(p,act)$ noted $act@p$. For a node $i$, child nodes are interactions $i'$ obtained via the execution of any frontier action $act=i_{|p}$ with $p\in front(i)$. Any such child node $i'$ corresponds to an interaction that accepts traces that are suffixes of traces accepted by $i$ and which start with $act$. Let us note that, given the existence of loops, execution trees can be infinite, and traces can be arbitrarily long.

\begin{figure}[!ht]
\centering
\includegraphics[width=.6\textwidth]{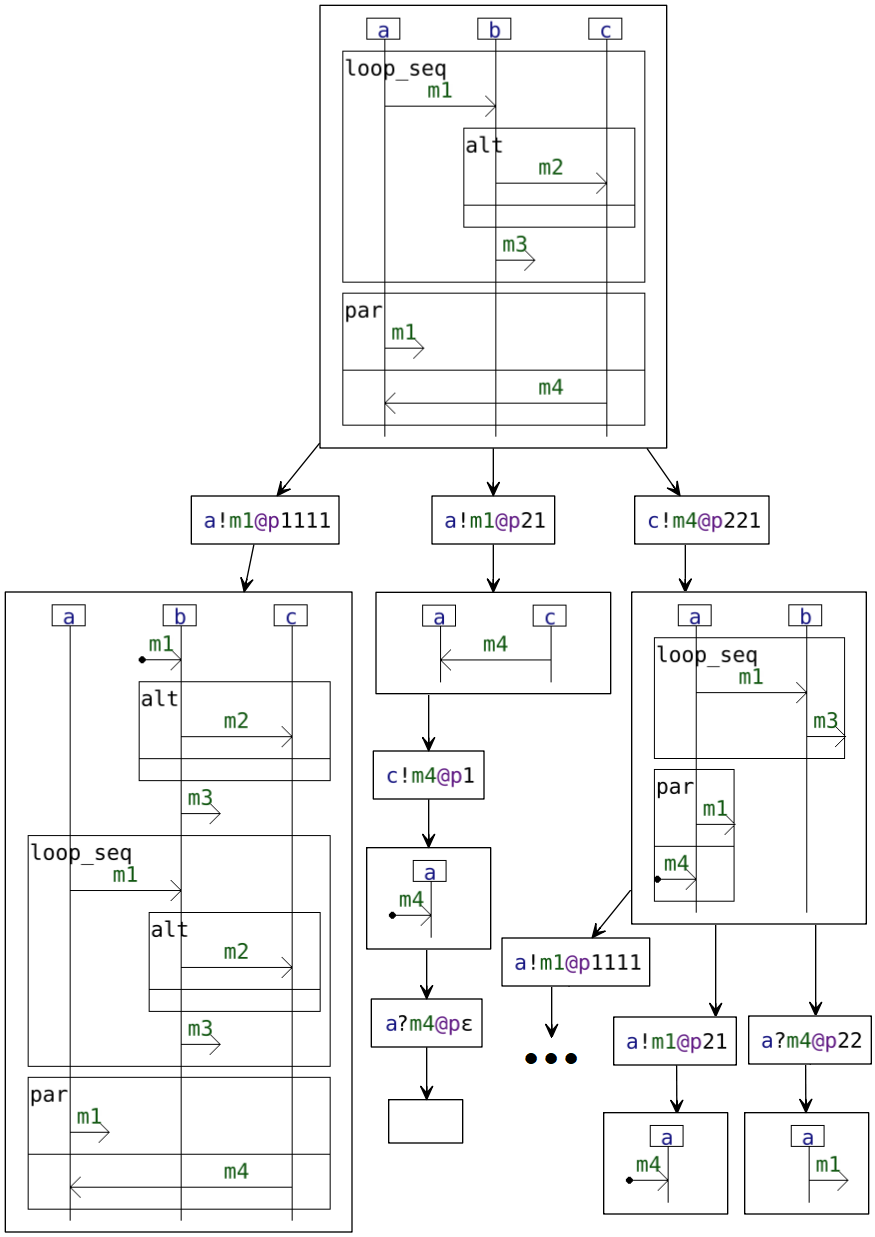}
\caption{Partial exploration of an interaction's trace semantics using $\chi$ as small-step}
\label{fig:partial_explo}
\vspace*{-.75cm}
\end{figure}

On Fig.\ref{fig:partial_explo} is illustrated such an execution tree. It is only partially drawn given that, in any case, the original interaction having a $loop$ node, this tree is infinite. Let us note that the transformation \resizebox{!}{10pt}{$i \xrightarrow{c!m_4@221} \_$} in which the interaction on the top part becomes the one underneath on its right (after transition "\hlf{$c$}$!$\hms{$m_4$}\hfresh{@}$221$"), corresponds to the e$\chi$ecution illustrated on Fig.\ref{fig:execution_example}. The 3 child interaction underneath the top one correspond to the e$\chi$ecutions of the $3$ frontier actions of $i$ (detailed in Fig.\ref{fig:frontier_ex}). The path leading to the empty interaction (white square $\square$) yields the trace $a!m_1.c!m_4.a?m_4$ (which is therefore part of the trace semantics of $i$). 

In Def.\ref{def:operational_semantics} below, we formally define our trace and multi-trace semantics $Accept$ and $AccMult$.

\begin{definition}[Semantics\label{def:operational_semantics}]
The sets of accepted traces $Accept : Int \rightarrow \mathcal{P}(Act^*)$ and multi-traces $AccMult : Int \rightarrow \mathcal{P}(Mult)$
are s.t. for any $i \in Int$:
\[
Accept(i) = empty(i) \cup
\left\{ ~
act.Accept(i') ~|~
\exists ~ p \in front(i), ~ \chi(i,p) = (i',act) ~
\right\}
\]
\[AccMult(i) = \{proj(\varsigma) ~|~ \varsigma \in Accept(i) \}\]

with: $empty(i) = \{\epsilon\}$ if $exp_\epsilon(i)$ and $empty(i) = \emptyset$ otherwise.
\end{definition}

\section{Multi-trace membership analysis\label{sec:analysis}}

\subsection{Principle\label{sec:ana_principle}}

We define a process able to decide whether or not a multi-trace $\mu$ is accepted by an interaction $i$. Its key principle is to construct traces accepted by $i$ that project on $\mu$. Constructing those traces is based on elementary steps $(i,\mu) \leadsto (i',\mu')$ s.t. $\chi(i,p) = (i',act_j)$ for some $p\in front(i)$ with $act_j \in Act(l_j)$
and $\mu$ and $\mu'$ being of resp. forms $(\sigma_1,\cdots,act_j.\sigma_j,\cdots,\sigma_n)$ and  
$(\sigma_1,\cdots,\sigma_j,\cdots,\sigma_n)$.

Any such step $(i,\mu) \leadsto (i',\mu')$ corresponds simultaneously, for a given action $act$, (1) to a small-step e$\chi$ecution in $i$ of $act$ (there can be several of those given that a same action can be present at multiple positions) and (2) to the consumption of $act$ on the component $lf(act)$ of $\mu$ (reducing its size by $1$).

By considering all possible matches between frontier actions $i_{|p}$ and the heads of components of $\mu$, and by iterating those steps of computation, the process builds a tree whose paths are of the form $(i_0,\mu_0) \leadsto \cdots \leadsto (i_p,\mu_p) \cdots \leadsto (i_q,\mu_q)$, denoted as  $(i_0,\mu_0)$ \resizebox{!}{10pt}{$\overset{*}{\leadsto}$}  $(i_q,\mu_q)$.

At each step $(i,\mu) \leadsto (i',\mu')$, the size of the multi-trace decreases by one. Hence, any path eventually reaches a point where it is no longer possible to find a next step. This halting of the process can occur in 2 cases.

(1) Either the process reaches a state $(i_q,\mu_q)$ where $\mu_q$ is not empty and no frontier action of $i_q$ matches some first elements in $\mu_q$. In that case the sequence of actions that leads to $(i_q,\mu_q)$ is not a trace accepted by $i$ and a local verdict $UnCov$ (for "multi-trace not covered") is associated to $(i_0,\mu_0)$ \resizebox{!}{10pt}{$\overset{*}{\leadsto}$}  $(i_q,\mu_q)$.

(2) Or the process reaches a state $(i_q,(\epsilon,\cdots ,\epsilon))$. Here, all actions of $\mu$ have been consumed to form a given global trace $\varsigma$. The process then checks if $\varsigma$ is accepted by $i$ (which happens iff $exp_\epsilon(i_q)$). If the answer is yes then $(i_0,\mu_0)$ \resizebox{!}{10pt}{$\overset{*}{\leadsto}$}  $(i_q,(\varepsilon,\cdots ,\varepsilon))$ is associated with a coverage verdict $Cov$ (for "multi-trace covered"). Otherwise, the verdict $UnCov$ is associated to the path. 

If there exists a path leading to $Cov$, the global verdict is $Pass$. If no such path exists, the global verdict is $Fail$.

\begin{figure}
\vspace*{-.75cm}
    \centering
\begin{tikzpicture}[
local_verdict/.style = {
              draw=yellow,
              minimum width=1.25cm,
              outer ysep=-.5mm,
              inner sep=2mm,
              line width=0.5mm,
              append after command={\pgfextra{
                    \node [#1,
                           draw=black, 
                           line width=0.75mm,
                           inner sep=0mm,
                           fit=(\tikzlastnode)] {};
              }}}
]
\tikzstyle{uncov}=[tape,local_verdict=tape,fill=red!40,label=center:$UnCov$]
\tikzstyle{cov}=[tape,local_verdict=tape,fill=darkspringgreen!40,label=center:$Cov$]
\tikzstyle{vertex}=[ellipse,fill=teal!40,minimum width=1.6cm,minimum height=.7cm]
\node [vertex,label=center:$i_0~\mu_0$] (x0) at (0,0)   {};
\node [vertex,label=center:$i_1~\mu_1$] (x1) at (3.25,-.75)  {};
\node [vertex,label=center:$i_{1_1}~\mu_{1_1}$] (x11) at (4.5,-1.75)  {};
\node [vertex,label=center:$i_2~\mu_2$] (x2) at (2.4,-1.5)    {};
\node[uncov] (u) at (5.5,-3) {};
\node[cov] (c) at (4,-5.5) {};
\node [vertex,label=center:$i_k~\mu_k$] (xk) at (-2.75,-1)  {};
\node (xkleg) at  (-3,-.1) {{\footnotesize $k=\kappa(i_0,\mu_0)$}};
\node (xkleg2) at (-3,-.4) {{\footnotesize matches}};
\node (anchor_xk) at (-3.25,-2.5) {{\large $\cdots$}};
\node [vertex,label=center:$i_{cov}^1~\mu_{cov}^1$] (xw1) at (-.75,-1.75)  {};
\node (left_xw1) at (-1.75,-2.5) {{\large $\cdots$}};
\node (right_xw1) at (.75,-2.5) {{\large $\cdots$}};
\node [vertex,label=center:$i_{cov}^2~\mu_{cov}^2$] (xw2) at (-.5,-3)  {};
\node (xwdots) at (-.25,-4.25) {{\large $\cdots$}};
\node (left_xw2) at (-1.5,-3.75) {{\large $\cdots$}};
\node (right_xw2) at (1,-3.75) {{\large $\cdots$}};
\node [vertex,label=center:$i_{cov}^r~\mu_{cov}^r$] (xwr) at (1,-5)  {};
\node (xwrleg) at (.75,-5.6) {{\footnotesize $r=|\mu_0|$}};
\draw[line width=2.25pt, line cap=round, dash pattern=on 0pt off 2\pgflinewidth] (xk) edge[bend right=15] (xw1);
\draw[line width=2.25pt, line cap=round, dash pattern=on 0pt off 2\pgflinewidth] (x2) edge[bend left=15] (xw1);
\draw[thick,orange,->] (x0) edge[bend left=5] (x1);
\draw[thick,orange,->] (x1) edge[bend left=5] (x11);
\draw[thick,orange,->] (x0) edge[bend left=10] (x2);
\draw[thick,orange,->] (x0) edge[bend right=10] (xk);
\draw[thick,->] (xk) edge[bend right=10] (anchor_xk);
\draw[thick,orange,->] (x0) edge (xw1);
\draw[thick,orange,->] (xw1) edge (xw2);
\draw[thick,->] (xw1) edge[bend right=10] (left_xw1);
\draw[thick,->] (xw1) edge[bend left=10] (right_xw1);
\draw[thick,orange,->] (xw2) edge (xwdots);
\draw[thick,->] (xw2) edge[bend right=10] (left_xw2);
\draw[thick,->] (xw2) edge[bend left=10] (right_xw2);
\draw[thick,orange,->] (xwdots) edge[bend right=20] (xwr);
\draw[thick,red,->] (x11) edge[bend left=10] (u);
\draw[thick,red,->] (x2) edge[bend right=10] (u);
\draw[thick,darkspringgreen,->] (xwr) edge[bend right=15] (c);
\draw[dashed,red,->] (right_xw2) edge[bend right=40] (u);
\draw[dashed,darkspringgreen,->] (right_xw2) edge[bend right=15] (c);
\draw[dashed,red,->] (right_xw1) edge[bend right=10] (u);
\draw[dashed,darkspringgreen,->] (right_xw1) edge[bend left=5] (c);
\node[draw] (legR1) at (-3,-4.4) { {\scriptsize \textcolor{darkspringgreen}{$\blacksquare$} rule $R1$} };
\node[draw,below=.6 of legR1.west,anchor=west] (legR2or4) { {\scriptsize \textcolor{red}{$\blacksquare$} rule $R2$ or $R4$} };
\node[draw,below=.6 of legR2or4.west,anchor=west] (legR3) { {\scriptsize \textcolor{orange}{$\blacksquare$} rule $R3$} };
\end{tikzpicture}
    \caption{Principle of multi-trace analysis}
    \label{fig:principle_ana}
\vspace*{-.75cm}
\end{figure}

Let us consider the illustration on Fig.\ref{fig:principle_ana}. Starting from node $(i_0,\mu_0)$, a number of paths can be explored.
From $(i_0,\mu_0)$, there exists $k$ outgoing transitions to other nodes, $k$ being the number of matches between frontier actions of $i_0$ and trace actions at the heads of components from $\mu$. Exploration steps (which are not represented but implicitly designated by $\cdots$ in Fig.\ref{fig:principle_ana}) are then repeated (recursively) for every one of those children.
Ultimately, every path that is thus created leads back to one of the two coverage verdicts $UnCov$ or $Cov$ (given the decreasing size of the multi-trace).

Paths starting from $(i_0,\mu_0)$ may have different lengths and different outcomes. This is explained by the fact that the graph explores how some executions of $i_0$ might (or might not) cover the behavior expressed by the multi-trace $\mu_0$. It may be so that there exists several executions of $i_0$ that match $\mu_0$. At the same time there might exists some that do not, and the fact that they do not match $\mu_0$ can be made clear after an arbitrary number (bounded by the length of $\mu_0$) of small-step e$\chi$ecutions.

With regard to $Cov$, the fact that several paths might lead to it may be explained by the fact that several global traces can be projected into the same multi-trace (as in Fig.\ref{fig:semantics_enumeration_ex}). Therefore, when trying to reorder $\mu_0$ into a global trace that satisfies $i_0$, we can find several of those.

In the example illustrated in Fig.\ref{fig:principle_ana}, there exists (at least one) such path $[(i_0,\mu_0) \leadsto (i_{cov}^1,\mu_{cov}^1) \leadsto \cdots \leadsto (i_{cov}^r,\mu_{cov}^r)]$ that leads to $Cov$. Given that obtaining $Cov$ requires to empty the initial multitrace $\mu_0$, the length of this path $r$ is equal to that of the multi-trace. The existence of this path then implies that the global verdict $Pass$ will be returned.


\subsection{Definition of analysis process\label{sec:ana_def}}

Multi-trace analysis relies on 4 rules, denoted $R1$, $R2$, $R3$ and $R4$ and given in Def.\ref{def:ana_rules}.
Those rules define a directed graph $\mathbb{G}$ in which vertices are either a tuple $(i,\mu) \in Int \times Mult$ or a coverage verdict $v \in \{Cov,UnCov\}$. We note $\mathbb{V} = \{Cov,UnCov\} \cup (Int \times Mult)$ the set of vertices.
For $x$ in $\{1,2,3,4\}$, the rule $(Rx) \frac{v}{v'} \; cond$, with $v \in Int \times Mult$ and $v' \in \mathbb{V}$  specifies edges of the form $v \leadsto v'$ of that graph, provided that $v$ satisfies condition $cond$.

\begin{definition}[Rules of Multi-Trace Analysis\label{def:ana_rules}]
\\The analysis relation 
$\leadsto \subseteq \mathbb{V} \times \mathbb{V}$ is defined as:

\vspace*{-.25cm}

\begin{minipage}[t]{.475\textwidth}
\begin{prooftree}
\AxiomC{$i$}
\AxiomC{$(\epsilon,\cdots,\epsilon)$}
\LeftLabel{(R1)}
\RightLabel{$exp_\epsilon(i)$}
\BinaryInfC{$Cov$}
\end{prooftree}
\end{minipage}
\begin{minipage}[t]{.475\textwidth}
\begin{prooftree}
\AxiomC{$i$}
\AxiomC{$(\epsilon,\cdots,\epsilon)$}
\LeftLabel{(R2)}
\RightLabel{$\neg exp_\epsilon(i)$}
\BinaryInfC{$UnCov$}
\end{prooftree}
\end{minipage}

\begin{minipage}[t]{.9\textwidth}
\begin{prooftree}
\AxiomC{$i$}
\AxiomC{$(\sigma_1,\cdots,act.\sigma_k,\cdots,\sigma_n)$}
\LeftLabel{(R3)}
\RightLabel{$\exists ~ p \in front(i) \text{ s.t. } \chi(i,p) = (i',act)$}
\BinaryInfC{$i' \hspace{0.5cm} (\sigma_1,\cdots,\sigma_k,\cdots,\sigma_n)$}
\end{prooftree}
\end{minipage}

\vspace*{.25cm}

\begin{minipage}[t]{.9\textwidth}
\begin{prooftree}
\AxiomC{$i$}
\AxiomC{$(\sigma_1,\cdots,\sigma_n)$}
\LeftLabel{(R4)}
\RightLabel{$
\left\{
\begin{array}{l}
(\sigma_1,\cdots,\sigma_n) \neq (\epsilon,\cdots,\epsilon)\\
\wedge 
\left(
\begin{array}{l}
\forall~ j \in [1,n],~\forall~ p \in front(i),\\
(\sigma_j \neq \epsilon) \Rightarrow (fst(\sigma_j) \neq i_{|p})
\end{array}
\right)
\end{array}
\right.
$}
\BinaryInfC{$UnCov$}
\end{prooftree}

\vspace*{.2cm}

where $fst(\sigma)$ denotes the first element of a non empty sequence $\sigma$.
\end{minipage}

\end{definition}

Vertices of the form $(i,\mu)$ are not sinks.
Indeed, if $\mu$ is the empty multi-trace, given that $exp_\epsilon(i)$ can either be $true$ or $false$, either $R1$ or $R2$ applies and therefore there exists an outgoing edge from any $(i,(\epsilon,\ldots,\epsilon))$. 
If $\mu \neq (\epsilon,\ldots,\epsilon)$, one can either have or not have matches between frontier actions and multi-trace component heads. Hence, an outgoing edge exists according to either $R3$ or $R4$. Consequently, coverage verdicts $\{Cov,UnCov\}$ are the $2$ only sinks of graph $\mathbb{G}$.

Rules $R1$, $R2$ and $R4$ specify edges from vertices of the form $(i,\mu)$ to coverage verdicts. The rule $R3$ specifies edges $(i,\mu) \leadsto (i',\mu')$ such that 
(1) there exists an action $act$ occurring in $i$ at position $p \in front(i)$ matching a head action $act_j$ of $\mu$, i.e. $\mu=(\sigma_1,\cdots,act_j.\sigma_j',\cdots,\sigma_n)$, 
(2) $i'$ is defined by $\chi(i,p) = (i',act_j)$, 
and (3) $\mu'$ is the multi-trace $\mu$ in which we have removed $act_j$, i.e. $\mu' =(\sigma_1,\cdots,\sigma_j',\cdots,\sigma_n)$.
Moreover, let us note that for a vertex $(i,\mu)$, there are at most $|front(i)|$ possible applications of the rule $R3$ with $|front(i)|$ bounded by the number of occurrences of actions in $i$.

Let us consider $|\mu|$ the number of actions occurring in a multi-trace $\mu$, i.e. the sum of lengths of its component traces. Let us extend this notation to vertices, that is, $|(i,\mu)|$ defined as $|\mu|$, and $|Cov|$ and $|UnCov|$ defined as $-1$.
For any edge $v \leadsto v'$ of $\mathbb{G}$,  we have $|v'| < |v|$ with $|v'| \geq -1$. Consequently, the successive application of the rules strictly decrements the size of nodes and from any vertex $(i,\mu)$, any maximal outgoing path is finite, and terminates in a coverage verdict in $\{Cov,UnCov\}$ (since $(i,\mu)$ are not sinks of $\mathbb{G}$). 
Thus, $\mathbb{G}$ is an acyclic graph. 
With the notation $v$ \resizebox{!}{10pt}{$\overset{*}{\leadsto}$} $v'$ to indicate that there is a path from $v$ to $v'$ in $\mathbb{G}$,
we define multi-trace analysis.

\begin{definition}[Multi-Trace Analysis\label{def:ana_algo}]
We define $\omega : Int \times Mult \rightarrow \{Pass,Fail\}$ such that for any $i \in Int$ and $\mu \in Mult$ we have:
\begin{itemize}
    \item $\omega(i,\mu) = Pass$ iff there exists a path $(i,\mu)$ \resizebox{!}{10pt}{$\overset{*}{\leadsto}$} $Cov$
    \item $\omega(i,\mu) = Fail$ otherwise; \\
    i.e. for all path $(i,\mu)$ \resizebox{!}{10pt}{$\overset{*}{\leadsto}$} $v$ with $v \in \{Cov,UnCov\}$, then $v=UnCov$
\end{itemize}
\end{definition}

The function $\omega$ is well-defined. Indeed, we established that any maximal path from a vertex $(i_0,\mu_0)$ has a maximum length of $|\mu|+1 $ and end on a coverage verdict ($Cov$ or $UnCov$). Besides, each intermediate vertex $(i,\mu)$ between $(i_0,\mu_0)$ and a coverage verdict has a number of children bounded by the number of actions of $i$. Therefore, the number of vertices reachable from $(i_0,\mu_0)$ is finite 

All definitions of Sec.\ref{sec:semantics} are structured by induction on terms and positions, and as such, allow a direct implementation of the execution function $\chi$ (involved in the application of rule $R3$). Similarly, one can implement the $\omega$ function by building on-the-fly the sub-graph originating from $(i,\mu)$ thanks to queues and usual search heuristics.
Moreover, in practice, graph traversals can be interrupted as soon as a $Cov$ verdict is reached. 
We have implemented $\chi$ and $\omega$ in the HIBOU tool available online\footnote{\url{https://github.com/erwanM974/hibou_label}}.
In HIBOU, traceability for end-users is facilitated by the operational nature of our approach.
Indeed, whether we are exploring an execution tree (as in Fig.\ref{fig:partial_explo}) or analyzing a multi-trace (as in Fig.\ref{fig:analysis_example}), HIBOU has access to the succession of states and can draw representations of the process.

In Fig.\ref{fig:analysis_example} below is represented the analysis of multitrace $\mu = (a!m_1.a?m_4,~\epsilon,~c!m_4)$ w.r.t. the interaction from Fig.\ref{fig:ex3}, which yields the global verdict $Pass$. Here we configured HIBOU to use a Depth First Search heuristic when exploring the graph $\mathbb{G}$. Given that $Cov$ was found quickly, paths starting with the executions of frontier action $c!m_4$ have not been explored.

\begin{figure}
    \centering
    \includegraphics[width=.7\textwidth]{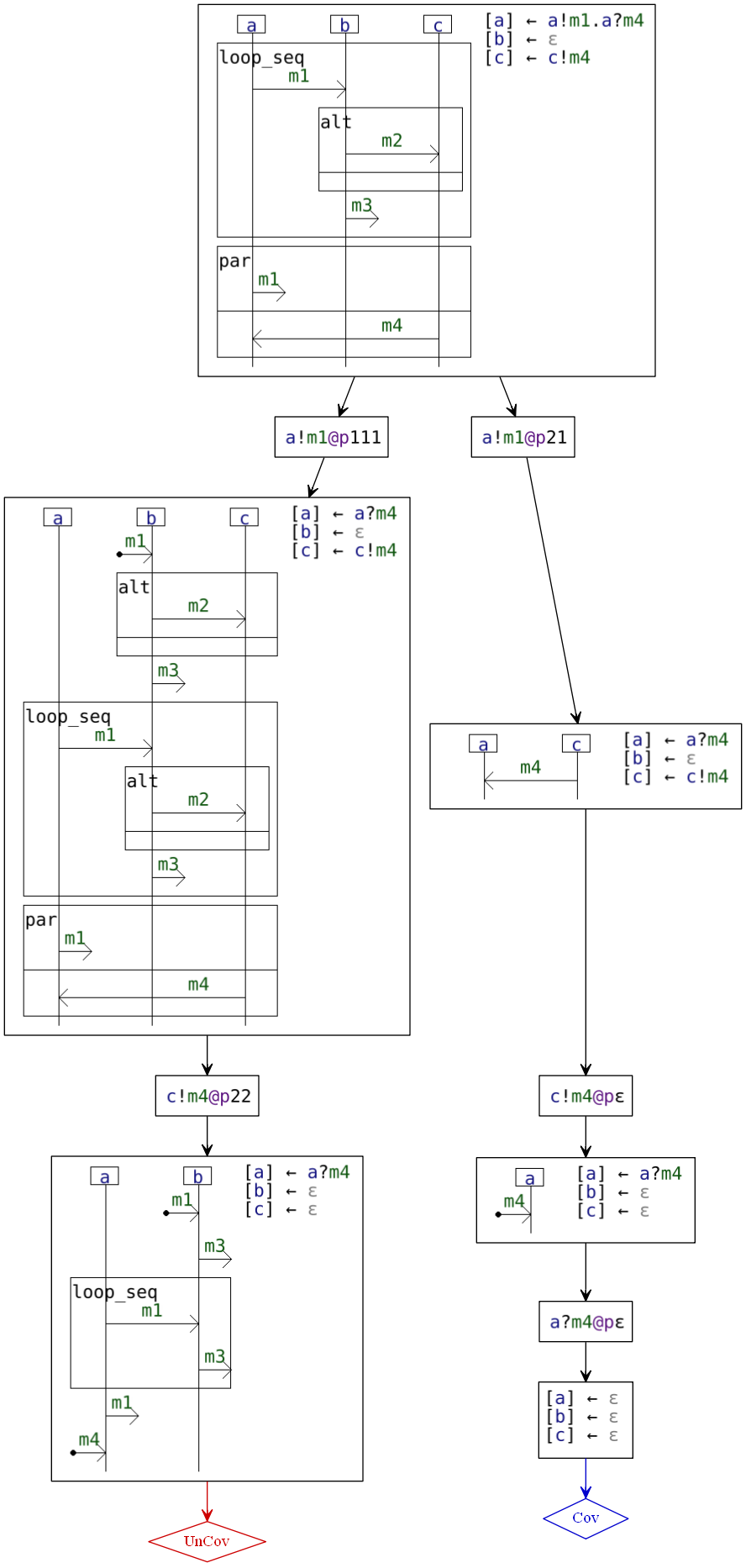}
    \caption{Example of multi-trace analysis}
    \label{fig:analysis_example}
\end{figure}

\clearpage

\subsection{Correctness w.r.t the semantics\label{sec:ana_charac}}

We now prove that the function $\omega$ in charge of analysing multi-traces w.r.t. an interaction
captures exactly its semantics defined by the step-by-step execution function $\chi$ given in Sec.\ref{sec:semantics}. More precisely, we will prove that for any $(i,\mu)$ in $Int \times Mult$, $\omega(i,\mu) = Pass$ iff $\mu \in AccMult(i)$ (and by extension,  $\omega(i,\mu) =  Fail$ iff $\mu \not\in AccMult(i)$). Given that $AccMult(i)$ is the set of projected global traces of $Accept(i)$, it then suffices to prove that for any trace $\varsigma \in Act^*$ we have $\omega(i,proj(\varsigma))=Pass$ iff $\varsigma \in Accept(i)$. Below, Th.\ref{th:accept_implies_pass} and Th.\ref{th:pass_implies_accept} resp. correspond to the $\Leftarrow$  and  $\Rightarrow$ implication of this "\emph{iff}".

\begin{theorem}[$Accept$ implies $Pass$\label{th:accept_implies_pass}]
For any $i \in Int$ and $\varsigma \in Act^*$:
\[
(\varsigma\in Accept(i)) \Rightarrow (\omega(i,proj(\varsigma)) = Pass)
\]
\end{theorem}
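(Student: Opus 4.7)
\medskip

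\noindent\textbf{Proof plan.} The plan is to proceed by induction on the length of the trace $\varsigma$, leveraging the recursive definition of $Accept$ and the fact that $proj$ commutes nicely with prepending an action. The induction hypothesis will be: for every interaction $i$ and every trace $\varsigma$ of length strictly less than $n$, if $\varsigma \in Accept(i)$ then $\omega(i,proj(\varsigma)) = Pass$.

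\medskip

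\noindent\textbf{Base case ($|\varsigma|=0$).} Here $\varsigma = \epsilon$ and $proj(\varsigma) = (\epsilon,\cdots,\epsilon)$. From Def.\ref{def:operational_semantics}, $\epsilon \in Accept(i)$ forces $empty(i) = \{\epsilon\}$, hence $exp_\epsilon(i)$. Rule $R1$ then yields $(i,(\epsilon,\cdots,\epsilon)) \leadsto Cov$, so by Def.\ref{def:ana_algo}, $\omega(i,proj(\varsigma)) = Pass$.

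\medskip

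\noindent\textbf{Inductive step.} Suppose $\varsigma = act.\varsigma'$ with $act \in Act(l_j)$ for some $j \in [1,n]$. Since $\varsigma \in Accept(i)$, the definition of $Accept$ gives a position $p \in front(i)$ such that $\chi(i,p) = (i',act)$ and $\varsigma' \in Accept(i')$. Writing $proj(\varsigma') = (\sigma_1,\cdots,\sigma_j,\cdots,\sigma_n)$, Def.\ref{def:projection_into_multi_trace} yields $proj(\varsigma) = (\sigma_1,\cdots,act.\sigma_j,\cdots,\sigma_n)$. Thus the premise of rule $R3$ is satisfied: $act$ is the head of the $j$-th component of $proj(\varsigma)$ and $\chi(i,p)=(i',act)$ with $p \in front(i)$. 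Applying $R3$ gives the edge
\[
(i,proj(\varsigma)) \leadsto (i',proj(\varsigma')) .
\]
Since $|\varsigma'| < |\varsigma|$, the induction hypothesis applies to $(i',\varsigma')$ and yields a path $(i',proj(\varsigma')) \overset{*}{\leadsto} Cov$. Concatenating the $R3$-edge above with this path produces a path $(i,proj(\varsigma)) \overset{*}{\leadsto} Cov$, so $\omega(i,proj(\varsigma)) = Pass$ by Def.\ref{def:ana_algo}.

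\medskip

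\noindent\textbf{Where the work lies.} The argument is essentially a structural matching between the small-step $\chi$ used to define $Accept$ and the rule $R3$ used inside $\leadsto$: both fire on exactly the same premise $\chi(i,p)=(i',act)$ with $p\in front(i)$. The only point that deserves care is the bookkeeping between the head action of $\varsigma$ and the head of the appropriate component of $proj(\varsigma)$, which follows directly from Def.\ref{def:projection_into_multi_trace}. There is no combinatorial obstacle here; the reverse implication (\emph{$Pass$ implies $Accept$}, handled by Th.\ref{th:pass_implies_accept}) will be the genuinely delicate direction, because one then has to reconstruct a globally accepted witness trace from an arbitrary successful path of $\mathbb{G}$ and check that rules $R2$ and $R4$ cannot wrongly produce $Cov$.
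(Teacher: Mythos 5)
Your proof is correct and follows essentially the same route as the paper's: induction on the trace, with the base case handled by $exp_\epsilon$ and rule $R1$, and the inductive step obtained by matching the $\chi$-step from the definition of $Accept$ with an application of rule $R3$ and then prepending that edge to the path given by the induction hypothesis. No gaps; the bookkeeping between $proj(act.\varsigma')$ and $proj(\varsigma')$ is exactly the point the paper also makes explicit.
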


\begin{proof}
Let us reason by induction on the trace $\varsigma$.\\
\noindent $\bullet$  $\varsigma=\epsilon$. Let us consider an interaction $i$ s.t. $\epsilon \in Accept(i)$. We have $proj(\epsilon) = (\epsilon,\cdots,\epsilon)$.  As $\epsilon \in Accept(i)$, then $exp_\epsilon(i) = \top$ and $R_1$ is applicable from $(i,(\epsilon,\cdots,\epsilon))$. We obtain $\omega(i,(\epsilon,\cdots,\epsilon)) = Pass$.

\noindent $\bullet$ $\varsigma = act.\varsigma'$. Let us consider $i$ s.t. $\varsigma \in Accept(i)$. The induction hypothesis on $\varsigma'$ is:  "$\forall~ i' \in Int,~ (\varsigma' \in Accept(i')) \Rightarrow (\omega(i',proj(\varsigma')) = Pass)$". As $act.\varsigma' \in Accept(i)$, then there exists $i'$ in $Int$ and $p \in front(i)$ s.t. $\chi(i,p)=(i',act)$ and $\varsigma' \in Accept(i')$.   Let us consider the index $j$ such that $proj(act.\varsigma') = (\sigma_1,\cdots,act.\sigma_j,\cdots,\sigma_n)$. Given that $\chi(i,p)=(i',act)$, $R3$ can be applied so that  $(i,(\sigma_1,\cdots,act.\sigma_j,\cdots,\sigma_n)) \leadsto (i',(\sigma_1,\cdots,\sigma_j,\cdots,\sigma_n))$ with $(\sigma_1,\cdots,\sigma_j,\cdots,\sigma_n) = proj(\varsigma')$. By induction, we have $(\omega(i',proj(\varsigma')) = Pass)$, i.e. there exists a path $(i',proj(\varsigma'))$ \resizebox{!}{10pt}{$\overset{*}{\leadsto}$} $Cov$. By preceding this path with $(i,proj(act.\varsigma')) \leadsto (i',proj(\varsigma'))$, we get a path $(i,(\sigma_1,\cdots,act.\sigma_j,\cdots,\sigma_n))$ \resizebox{!}{10pt}{$\overset{*}{\leadsto}$} $Cov$ and $\omega(i,proj(\varsigma)) = Pass$.
\qed
\end{proof}

\begin{theorem}[$Pass$ implies $Accept$\label{th:pass_implies_accept}]
For any $i \in Int$ and $\mu \in Mult$:
\[
(\omega(i,\mu) = Pass)
\Rightarrow 
\left( 
\exists~ \varsigma \in Act^* \text{ s.t. } proj(\varsigma) = \mu \text{ and } \varsigma \in Accept(i)
\right)
\]
\end{theorem}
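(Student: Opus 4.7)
The plan is to proceed by strong induction on the size $|\mu|$ of the multi-trace (equivalently, on the length of a shortest path from $(i,\mu)$ to $Cov$, which is always bounded by $|\mu|+1$ as noted after Def.\ref{def:ana_algo}). This mirrors the inductive structure used for Th.\ref{th:accept_implies_pass}, but in the reverse direction: rather than following the trace through executions of $\chi$, I will peel off the first step of a successful analysis path and reconstruct the certifying global trace from the tail.

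For the base case $|\mu|=0$, we have $\mu = (\epsilon,\ldots,\epsilon)$. Since $\omega(i,\mu)=Pass$, there must be a path from $(i,\mu)$ to $Cov$ in $\mathbb{G}$. Only rules $R1$ and $R2$ are applicable from a vertex with empty multi-trace, and only $R1$ produces $Cov$; its side condition forces $exp_\epsilon(i)$. By Def.\ref{def:operational_semantics}, this gives $\epsilon \in Accept(i)$, and $proj(\epsilon) = (\epsilon,\ldots,\epsilon) = \mu$, so $\varsigma = \epsilon$ witnesses the claim.

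For the inductive case $|\mu|>0$, write $\mu = (\sigma_1,\ldots,\sigma_n)$. Since $\mu$ is nonempty, neither $R1$ nor $R2$ is applicable; since $\omega(i,\mu) = Pass$, the first edge of some path $(i,\mu) \overset{*}{\leadsto} Cov$ cannot be given by $R4$ (which would lead to $UnCov$, a sink). Hence the first edge is given by $R3$: there exist an index $k$, an action $act = fst(\sigma_k) \in Act(l_k)$, a position $p \in front(i)$, and an interaction $i'$ such that $\chi(i,p) = (i',act)$ and $(i,\mu) \leadsto (i',\mu')$, where $\mu'$ is $\mu$ with $act$ removed from the head of its $k$-th component. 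The remainder of the path then witnesses $\omega(i',\mu') = Pass$, and $|\mu'| = |\mu|-1 < |\mu|$.

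Applying the induction hypothesis to $(i',\mu')$ yields a trace $\varsigma' \in Act^*$ with $proj(\varsigma') = \mu'$ and $\varsigma' \in Accept(i')$. Set $\varsigma = act.\varsigma'$. Since $\chi(i,p) = (i',act)$ with $p \in front(i)$, Def.\ref{def:operational_semantics} gives $act.\varsigma' \in Accept(i)$, i.e.\ $\varsigma \in Accept(i)$. Finally, because $act \in Act(l_k)$ and $proj(\varsigma') = \mu' = (\sigma_1,\ldots,\sigma_k',\ldots,\sigma_n)$ where $\sigma_k = act.\sigma_k'$, Def.\ref{def:projection_into_multi_trace} yields $proj(act.\varsigma') = (\sigma_1,\ldots,act.\sigma_k',\ldots,\sigma_n) = \mu$, completing the step. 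The only delicate point is identifying rule $R3$ as the unique possible first step and extracting the correct index $k$ and action $act$; once this is done, the recursive reassembly of $\varsigma$ from $act$ and $\varsigma'$ is immediate from the shape of $\chi$ in Def.\ref{def:labelled_interaction_execution} and the inductive definition of $Accept$.
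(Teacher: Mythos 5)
Your proof is correct and follows essentially the same route as the paper's: induction on $|\mu|$, identifying $R1$ as the only possible step in the base case and $R3$ as the necessary first edge of a successful path in the inductive case, then reassembling $\varsigma = act.\varsigma'$ from the induction hypothesis. You are in fact slightly more explicit than the paper in ruling out $R4$ as a possible first step, which is a welcome precision rather than a deviation.
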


\begin{proof}
Let us reason by induction on the size of $\mu$, i.e. on $|\mu|$.\\
\noindent $\bullet$ $|\mu| = 0$. Let us consider $i$ s.t. $\omega(i,\mu) = Pass$. By $|\mu|=0$, $\mu = (\epsilon,\cdots,\epsilon)$. 
Since $\omega(i,(\epsilon,\cdots,\epsilon)) = Pass$, $R1$ must apply and this implies that $exp_\epsilon(i) = \top$ and consequently $\epsilon \in Accept(i)$. Therefore the property holds at length 0.\\
\noindent $\bullet$ $|\mu| = z+1$ with $z \geq 0$. Let us consider $i$ s.t. $\omega(i,\mu) = Pass$. The induction hypothesis states that "for all $(i',\mu') \in Int\times Mult$ with $|\mu'| = z$, $(\omega(i',\mu') = Pass)$ $\Rightarrow$
$( 
\exists~ \varsigma' \in Act^* \text{ s.t. } proj(\varsigma')= \mu' \text{ and } \varsigma' \in Accept(i')
)$".
Since $\omega(i,\mu) = Pass$, there exists a path $(i,\mu)$
\resizebox{!}{10pt}{$\overset{*}{\leadsto}$}
$Cov$. As noticed in Sec.~\ref{sec:ana_def}, each edge of a maximal path exactly consumes one action, with the exception of the last edge leading to the coverage verdict. Thus the path starts with an edge of form $(i,\mu) \leadsto (i',\mu')$ with $|\mu'| = z$ and we have then $(i',\mu')$ \resizebox{!}{10pt}{$\overset{*}{\leadsto}$} $Cov$.  By definition, $\omega(i',\mu')=Pass$. By induction, there exists a trace $\varsigma'$ s.t. $proj(\varsigma')= \mu'$ and $\varsigma' \in Accept(i')$.
$(i,\mu) \leadsto (i',\mu')$  corresponds to the consumption of an action $act$ which matches a frontier action $i_{|p}$ of $i$.
By definition, the trace $\varsigma = act.\varsigma'$ verifies $proj(\varsigma) = \mu$ and $\varsigma \in Accept(i)$.
\qed
\end{proof}

The two theorems demonstrate that $\omega(i,\mu) = Pass$ characterizes the membership of a multi-trace $\mu$ to $AccMult(i)$. Those theorems and all the definitions and lemmas they depend on have been encoded in the Gallina language so as to formally verify our proofs using the Coq automated theorem prover. A Coq proof, which includes the 2 previous demonstrations is available online\footnote{\url{https://erwanm974.github.io/coq_hibou_label_multi_trace_analysis/}}.

The computational cost of $\omega$ varies greatly depending on the initial $(i,\mu)$ couple. In the following we demonstrate the NP-hardness of this membership problem through a reduction of the 1-in-3-SAT problem \cite{1in3SAT78}. This discussion is inspired by \cite{AlurEY01,GenestM08,Hierons14}.

\subsection{Discussion on Complexity}

1-in-3-SAT~\cite{1in3SAT78} is a particular Boolean satisifiability problem. Let us consider a set of $p\geq 1$ boolean variables $V=\{v_1,\cdots,v_p\}$ and a set of $q\geq 1$ clauses $\{C_1,\cdots,C_q\}$ in 3-CNF form i.e. s.t. for any $j\in [1,q]$, $C_j= \alpha_j \vee \beta_j \vee \gamma_j $ with $\alpha_j,\beta_j,\gamma_j$ in $V\cup \overline{V}$, $\bar{~}$ being the usual negation operator\footnote{canonically extended to any set of formulas $X$ as $\overline{X}=\{\overline{\psi}| \psi \in X\}$}. The 1-in-3-SAT problem on formula $\phi = C_1 \wedge \cdots \wedge C_q$ then consists in finding $\rho : V \rightarrow \{\top,\bot\}$ s.t.\footnote{"$\rho \models \phi$" is the  usual satisfaction relation in propositional logic.} $\rho \models \phi$ and s.t. for any clause $C_j$, only one in the three literals $\alpha_j$, $\beta_j$, or $\gamma_j$ is set to $\top$. In the following, we sketch a reduction proof which states that any 1-in-3-SAT problem can be reduced to the multi-trace membership problem for a given $(i,\mu) \in Int\times Mult$ (i.e. whether or not $\mu \in AccMult(i)$).

Let us consider the reduction of 1-in-3-SAT in the simple case where $p=4$ and $q=2$.
This approach can then be extended to include any other case.

From formula $\phi=C_1 \wedge C_2$, we define an interaction $i$ via a 1-on-1 transformation. This $i$ is of the form exemplified on Fig.\ref{fig:ex_1in3sat_simple} i.e. a parallelisation of 4 alternatives $alt(i_v,i_{\overline{v}})$
s.t. for any $x \in V\cup \overline{V}$, $i_x$ is s.t. if $x$ occurs: \\
\noindent $\bullet$ in $C_1$ and $C_2$ then $i_x=seq(l_1!m,l_2!m)$\\
\noindent $\bullet$ in $C_1$ but not in $C_2$ then $i_x=l_1!m$\\
\noindent $\bullet$ in $C_2$ but not in $C_1$ then $i_x=l_2!m$\\
\noindent $\bullet$ neither in $C_1$ nor in $C_2$ then $i_x=\varnothing$

For instance, with $C_1=(v_1 \vee \overline{v_2} \vee v_4)$ and $C_2=(v_1\vee v_3 \vee \overline{v_4})$,  Fig.\ref{fig:ex_1in3sat_simple} gives the corresponding interaction.

We affirm that this 1-in-3-SAT problem $\phi$ is equivalent to the multi-trace membership problem $\mu=(l_1!m, l_2!m) \in AccMult(i)$. Indeed, in a given execution of $i$, component $\sigma_1=l_1!m$ of $\mu$ is expressed exactly once iff exactly one of the sub-interactions $i_{\alpha_1}$, $i_{\beta_1}$ or $i_{\gamma_1}$ is "chosen" during the execution of $i$. Given that the parent interaction (within $i$) of sub-interaction $i_{\alpha_1}$ (same reasoning for $i_{\beta_1}$ and $i_{\gamma_1}$) is of the form $alt(i_{\alpha_1},i_{\overline{\alpha_1}})$ (or with the order of branches inverted), what we mean by "chosen" is that the exclusive branch that hosts $i_{\alpha_1}$ is chosen over that which hosts $i_{\overline{\alpha_1}}$.

The expression of component $\sigma_1$ on lifeline $l_1$ is therefore equivalent to the satisfaction of clause $C_1$ in 1-in-3-SAT. In our example, with $C_1=(v_1 \vee \overline{v_2} \vee v_4)$, the fact that $\rho \models C_1$ with $\rho:[v_1 \rightarrow \bot, v_2 \rightarrow \top, v_3 \rightarrow \top, v_4 \rightarrow \top]$ is equivalent to the fact that $l_1!m$ is expressed exactly once during the execution of $i$ when $i_{\overline{v_1}}$ is chosen over $i_{v_1}$,  $i_{v_2}$ over $i_{\overline{v_2}}$, $i_{v_3}$ over $i_{\overline{v_3}}$, and $i_{v_4}$ over $i_{\overline{v_4}}$.

\begin{wrapfigure}{l}{0.28\textwidth}
\vspace{-.75cm}
\centering
\includegraphics[width=.3\textwidth]{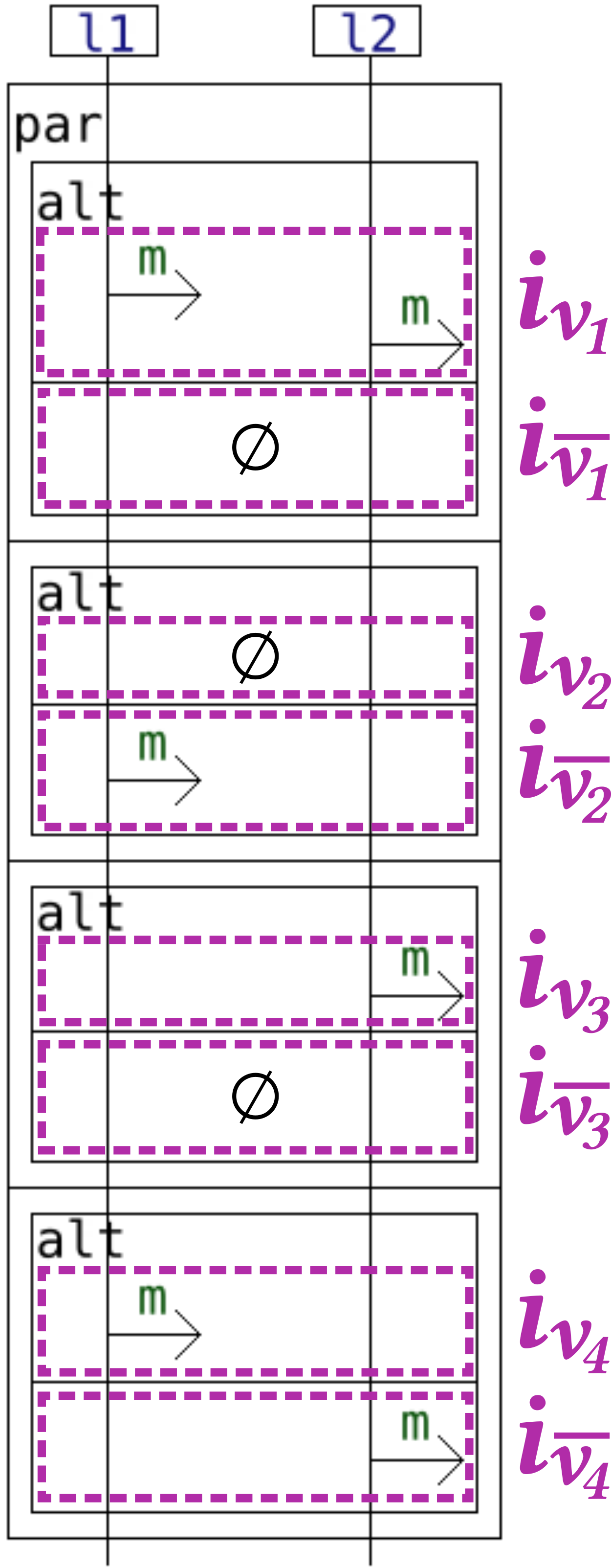}
\caption{Reduction}
\label{fig:ex_1in3sat_simple}
\vspace{-.75cm}
\end{wrapfigure}

The same reasoning can be applied as for the relationship between clause $C_2$ and component $\sigma_2$.

In other words, during the execution of $i$, given the use of exclusive alternative operators in $alt(i_v,i_{\bar{v}})$ sub-terms, the choice of either one of the $alt$ branch constitutes an assignment of Boolean variable $v$. The overall parallel composition then simulates all possible variable assignments (i.e. the search space for $\rho$).

Then, the satisfaction of $\phi$ as the conjunction of clauses $C_1$ and $C_2$ in 1-in-3-SAT is equivalent to that of $\mu=(\sigma_1,\sigma_2) \in AccMult(i)$. Indeed, the same $\rho$ must be used to solve both $C_1$ and $C_2$ and the same global execution of $i$ must be used to consume both $\sigma_1$ and $\sigma_2$ exactly.

In our example, $\phi = (v_1 \vee \overline{v_2} \vee v_4) \wedge (v_1\vee v_3 \vee \overline{v_4})$ is solvable in 1-in-3-SAT by $\rho:[v_1 \rightarrow \bot, v_2 \rightarrow \top, v_3 \rightarrow \top, v_4 \rightarrow \top]$. This is equivalent to the fact that $\mu=(l_1!m,l_2!m)$ is consumed exactly by the execution of $i$ from Fig.\ref{fig:ex_1in3sat_simple} when $i_{\overline{v_1}}$ is chosen over $i_{v_1}$,  $i_{v_2}$ over $i_{\overline{v_2}}$, $i_{v_3}$ over $i_{\overline{v_3}}$, and $i_{v_4}$ over $i_{\overline{v_4}}$. For any such 3-CNF formula $\phi=C_1\wedge C_2$ defined over $V=\{v_1,\cdots,v_4\}$, the 1-in-3-SAT problem can therefore be reduced to that of the membership of $(l_1!m,l_2!m)$ w.r.t. the interaction $i$ constructed from $\phi$ as above.

As explained earlier, this sketch of proof can be extended to include any numbers $p$ and $q$ of resp. variables and clauses. It suffices to consider $q$ lifelines $l_1,\cdots,l_q$, the multi-trace $\mu=(l_1!m,\cdots,l_q!m)$ and $p$ parallelized sub-interactions $alt(i_{v_1},i_{\overline{v_1}})$, $\cdots$,  $alt(i_{v_p},i_{\overline{v_p}})$. This generalized version is detailed in the annex.

Given that we have identified a case of multi-trace membership equivalent to a NP-complete problem, by reduction, multi-trace membership is NP-hard.

\subsection{Toward testing multi-traces against interactions}

When testing DSs, the $\omega$ function cannot be directly used. Indeed, the absence of global clock in a DS might cause difficulties to synchronize the cessation of observations on the different sub-systems. Let us consider for example the term $i=strict(a!m,b?m)$ and a system under test composed of a sub-system $Sys_a$ (implementing lifeline $a$) communicating with a system $Sys_b$ (implementing lifeline $b$). A tester connected to $Sys_a$ might observe the empty execution $\epsilon$ because he stopped logging before the occurrence of emission $a!m$. On the other hand a tester connected to $Sys_b$ could have logged long enough to observe $b?m$ so that the overall observation would correspond to $\mu=(\epsilon, b?m)$. As $\mu$ is a strict prefix of an accepted multi-trace, but not an accepted multi-trace itself, $\omega(i,\mu)$ would return $Fail$. From a testing perspective one would like to conclude that, even though $\mu$ is not an accepted multi-trace, a longer observation on $Sys_a$ might have enabled the observation of $a!m$, yielding to the global observation $(a!m,b?m)$. As such, $\mu$ does not reveal a fault, contrary to observations that are not prefixes of accepted multi-traces.

In all generality, being able to determine whether or not a multi-trace is a prefix of an accepted multi-trace would require techniques that are beyond the scope of this paper. However, we propose a first approach in which we simply adapt $\omega$ to identify couples $(i,\mu)$ for which we can not make a decision (and therefore provide them with dedicated verdict of inconclusiveness).

Rules of Def.\ref{def:ana_rules} are adapted as follows. For a couple $(i,\mu)$, if $\mu$ is empty and $i$ does not express $\epsilon$, rule $R2$ applies and leads to $UnCov$. However, we could rather return a $TooShort$ verdict (as we did for trace analysis in \cite{fase2020}). Intuitively the existence of an execution (from an initial couple $(i_0,\mu_0)$) leading to this verdict will prove that the trace leading to it is a prefix of a trace accepted by $i_0$, but not a trace accepted by $i_0$. Now, if $\mu$ is not empty and from $(i,\mu)$ no first element of a component of $\mu$ matches a frontier action of $i$, rule $R4$ applies and leads to the $UnCov$ verdict. However we can here distinguish between cases in which an observability problem (as discussed earlier) may arise and cases in which it does not. 
(a) If no component of $\mu$ has been emptied, this means that, at this point of the test, observations on all sub-systems are still ongoing. We are therefore sure of having a true error because no further execution complies with the interaction model. We may therefore return an $Out$ verdict.
(b) If at least one component of $\mu$ has been emptied, this means that the tester ceased logging on the corresponding sub-system. It might be that a longer observation on this sub-system would have enabled the application of rule $R3$. However we lack this information and hence return a $LackObs$ verdict.

Those considerations are reflected in the rules by slightly changing $R1$ (replacing $UnCov$ by $TooShort$) and by sub-dividing $R4$ into rules $R4a$ and $R4b$ which discriminate between the 2 aforementioned cases and are s.t.:

\begin{minipage}[t]{.9\textwidth}
\begin{prooftree}
\AxiomC{$i$}
\AxiomC{$(\sigma_1,\cdots,\sigma_n)$}
\LeftLabel{$(R4a)$}
\RightLabel{$
\left\{
\begin{array}{l}
\forall j \in [1,n],~ \sigma_j \neq \epsilon,\\
\forall j \in [1,n],~ \forall p\in  front(i),~ fst(\sigma _j)\neq i_{|p} 
\end{array}
\right.
$}
\BinaryInfC{$Out$}
\end{prooftree}
\end{minipage}

\vspace*{.75cm}

\begin{minipage}[t]{.7\textwidth}
\begin{prooftree}
\AxiomC{$i$}
\AxiomC{$(\sigma_1,\cdots,\sigma_n)$}
\LeftLabel{$(R4b)$}
\RightLabel{$
\left\{
\begin{array}{l}
(\sigma_1,\cdots,\sigma_n) \neq (\epsilon,\cdots,\epsilon)\\
\land (\exists j \in [1,n] \text{ s.t. } \sigma_j=\epsilon)\\
\wedge 
\left(
\begin{array}{l}
\forall j \in [1,n],~ \forall p\in front(i),\\
\sigma _j \neq \epsilon \Rightarrow fst(\sigma_j)\neq i_{|p}
\end{array}
\right)
\end{array}
\right.
$}
\BinaryInfC{$LackObs$}
\end{prooftree}
\end{minipage}

\vspace*{.75cm}

We can then formulate a new $\widetilde{\omega} : Int \times Mult \rightarrow \{Pass,WeakPass,Inconc,Fail\}$ function s.t. for any $i \in Int$ and $\mu \in Mult$ we have:

\begin{itemize}
    \item $\widetilde{\omega}(i,\mu) = Pass$ iff there exists a path $(i,\mu)$ \resizebox{!}{10pt}{$\overset{*}{\leadsto}$} $Cov$
    \item $\widetilde{\omega}(i,\mu) = WeakPass$ iff there does not exist a path $(i,\mu)$ \resizebox{!}{10pt}{$\overset{*}{\leadsto}$}
            $Cov$ and there exists a path $(i,\mu)$ \resizebox{!}{10pt}{$\overset{*}{\leadsto}$} $TooShort$
    \item $\widetilde{\omega}(i,\mu) = Inconc$ iff there does not exist a path of the form $(i,\mu)$ \resizebox{!}{10pt}{$\overset{*}{\leadsto}$} $v$ with $v \in \{Cov,TooShort\}$ 
    and there exists a path $(i,\mu)$ \resizebox{!}{10pt}{$\overset{*}{\leadsto}$} $LackObs$
    \item $\widetilde{\omega}(i,\mu) = Fail$ if all paths lead to $Out$
\end{itemize}

$\widetilde{\omega}$ is just a first proposal to deal with DS testing, and we will, in future works propose more precise methods for applying multi-trace analysis to testing.
For example, instead of producing $LackObs$ when the conditions of applications of $R4b$ holds,  we could explore the possible future characterized by $i$, pursuing the goal of identifying continuations starting by actions of $Act(l_j)$, for some $j\leq n$ such that $\sigma _j=\epsilon$. We could then try to produce more precise verdicts, by reasoning on the existence of such continuations, obtained by simulating some executions of $i$ rather than by consuming actions in $(\sigma_1,\cdots ,\sigma_n)$.

\section{Conclusion\label{sec:conclusion}}

We have proposed an approach to decide on the membership of multi-traces w.r.t. semantics defined on interaction models. The analysis consists in applying non-deterministic reading of the multi-trace using small-steps of the operational semantics. This approach have been validated with formal proofs of correctness using Coq, and a study on complexity. Moreover, a prototype tool that implements this analysis method has been developed in line with theoretical claims. Finally, we have discussed how membership analysis can be extended for testing distributed systems where logging of multi-traces is performed under observability limitations.
This last subject, with that of introducing data in models will be the objects of further works.

\bibliographystyle{splncs04}
\bibliography{biblio}

\clearpage

\appendix
\section{NP-hardness sketch of proof in general case}

For practicity, we will use $n-ary$ notations for binary operators $f \in \{strict,$ $seq,par,alt\}$, with $i=f(i_1,\cdots,i_n)$ designating the folding of $f$ s.t. $i=f(i_1,f($ $\cdots,f(i_{n-1},i_n)\cdots))$. We now reduce the 1-in-3-SAT boolean satisfiability problem to multitrace membership so as to prove the NP-hardness of the latter.

Let us consider a 3-CNF formula $\phi = C_1\wedge \cdots \wedge C_q$ defined over a set $V = \{v_1,\cdots,v_p\}$ of Boolean variables. $\phi$ being a 3-CNF formula, for any $j\in [1,q]$, $C_j$ is a disjunction of 3 literals $\alpha_j \vee \beta_j \vee \gamma_j$ (i.e. $\alpha_j$, $\beta_j$ and $\gamma_j$ are of the form $v$ or $\overline{v}$ with $v \in V$ and $\bar{~}$ the negation operator).

The 1-in-3 SAT problem consists in finding a solution $\rho : V \rightarrow \{\top,\bot\}$ s.t. for every clause $C_j$ only one of $\alpha_j$, $\beta_j$ or $\gamma_j$ is set to $\top$.

For any $k\in [1,p]$ let us define $i_k=alt(i_{v_k},i_{\overline{v_k}})$ with $i_{v_k} = seq( i_{v_k}^1, \cdots, i_{v_k}^q )$ and $i_{\overline{v_k}} = seq( i_{\overline{v_k}}^1, \cdots, i_{\overline{v_k}}^q )$ s.t. for all $j \in [1,q]$, given clause $C_j$ we have:\\
\noindent $\bullet$ if $v_k$ occurs in $C_j$ then $i_{v_k}^j = l_j!m$ and else $i_{v_k}^j = \varnothing$\\
\noindent $\bullet$ if $\overline{v_k}$ occurs in $C_j$ then $i_{\overline{v_k}}^j = l_j!m$ and else $i_{\overline{v_k}}^j = \varnothing$

Let us then consider $i=par(i_1,\cdots,i_p)$ as illustrated on Fig.\ref{fig:ev_1in3sat}. For instance, given $C_1 = v_1 \vee v_2 \vee \overline{v_p}$ we have $i_{v_1}^1 = l_1!m$, $i_{\overline{v_1}}^1 = \varnothing$, $i_{v_2}^1 = l_1!m$, $i_{\overline{v_2}}^1 = \varnothing$, $i_{\overline{v_p}}^1 = l_1!m$ and $i_{v_p}^1 = \varnothing$. Likewise, the other emissions of $m$ drawn in Fig.\ref{fig:ev_1in3sat} correspond to $C_2 = \overline{v_1} \vee v_2 \vee v_p$ and $C_q = v_1 \vee \overline{v_2} \vee \overline{v_p}$.

The 1-in-3-SAT problem $\phi$ is then equivalent to the multi-trace membership problem $\mu=(l_1!m, \cdots, l_q!m) \in AccMult(i)$.

\begin{wrapfigure}{l}{0.39\textwidth}
\vspace{-.75cm}
\centering
\includegraphics[width=.4\textwidth]{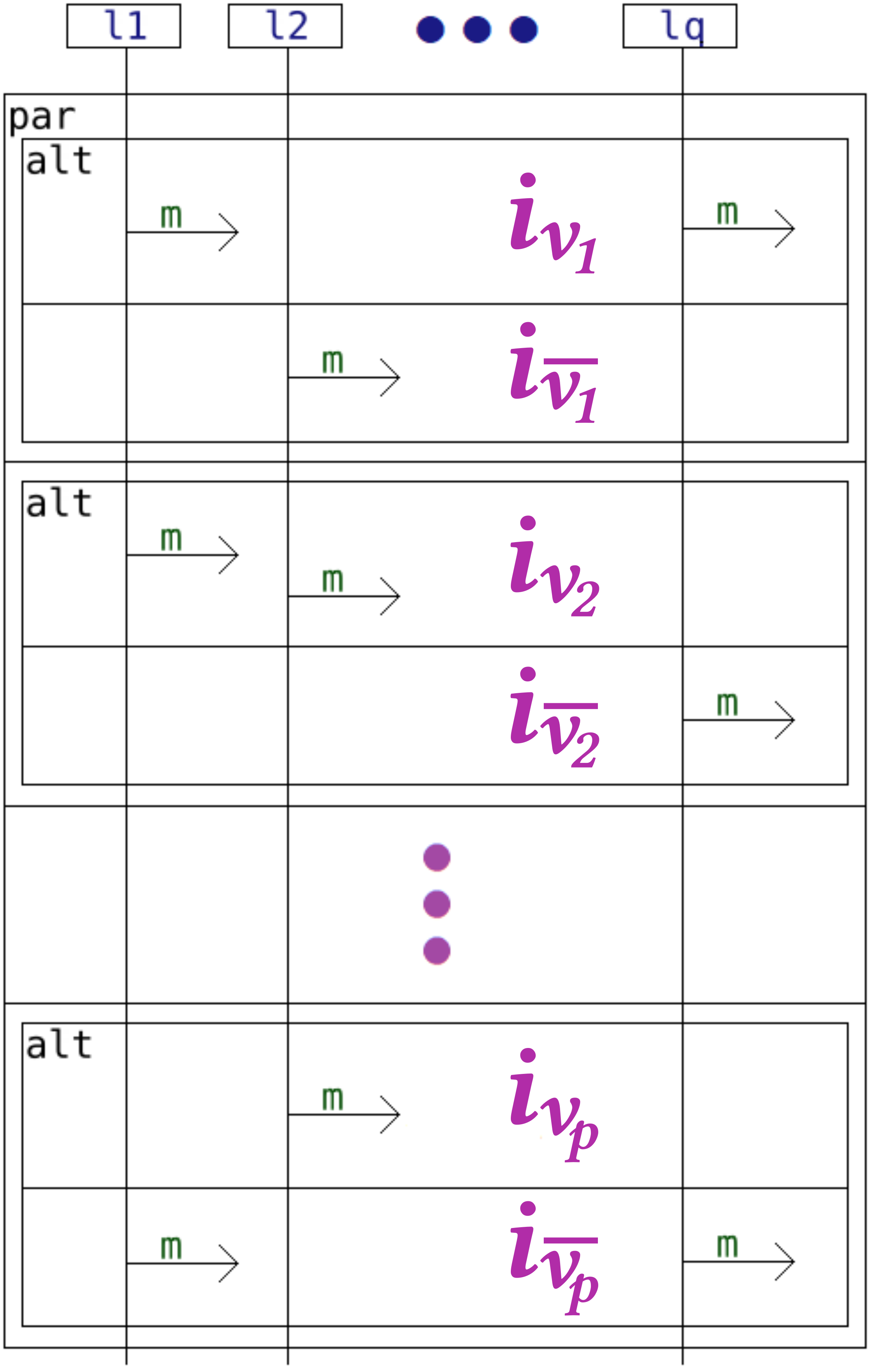}
\caption{$i$ obtained from $\phi$}
\label{fig:ev_1in3sat}
\vspace{-.75cm}
\end{wrapfigure}

Indeed, for any component $\sigma_j = l_j!m$ of $\mu$, $\sigma_j$ is expressed exactly once iff exactly one of the sub-interactions $i_{\alpha_j}$, $i_{\beta_j}$ or $i_{\gamma_j}$ is chosen during the execution of $i$ (choice w.r.t. their respective parent $alt$ operator). The satisfaction of component $\sigma_j$ is therefore equivalent to that of clause $C_j$. For instance, on the example from Fig.\ref{fig:ev_1in3sat}, $\sigma_1=l_1!m$ is satisfied iff only one of $i_{v_1}$, $i_{v_2}$ or $i_{\overline{v_p}}$ is chosen on their respective alternative branches, which exactly corresponds to the satisfaction of $C_1 = v_1 \vee v_2 \vee \overline{v_p}$ in 1-in-3-SAT,

In other words, during the execution of $i$, given the use of exclusive alternative operators in $alt(i_x,i_{\overline{x}})$ sub-terms, the choice of either one of the $alt$ branch constitutes an assignment of boolean variable $x$. The overall parallel composition then simulates all possible variable assignment (i.e. the search space for $\rho$).

Then, the satisfaction of $\phi$ as the conjunction of clauses $C_j$ in 1-in-3-SAT is equivalent to that of $\mu \in AccMult(i)$, given that the same execution of the model $i$ must satisfy conjointly every component $\sigma_j$.

\end{document}